\newtheorem{theorem}{Theorem}%
\newaliascnt{lemma}{theorem}
\newtheorem{lemma}[lemma]{Lemma}%
\newtheorem{corollary}{Corollary}%
\newtheorem{proposition}{Proposition}%
\theoremstyle{definition}
\newtheorem{definition}{Definition}
\newtheorem{example}{Example}
\newtheorem{remark}{Remark}
\newcommand{\AutoAdjust}[3]{\mathchoice{ \left #1 #2  \right #3}{#1 #2 #3}{#1 #2 #3}{#1 #2 #3} }
\newcommand{\Xcomment}[1]{{}}
\newcommand{\InBrackets}[1]{\AutoAdjust{[}{#1}{]}}% {\left[{#1}\right]}
\newcommand{\Ex}[2][]{\operatorname{\mathbf E}_{#1}\InBrackets{#2}}
\newcommand{\given}{\;\mid\;}
\newcommand{\eps}{\epsilon}
\newcommand{\noaccents}[1]{#1}
\newcommand{\newagentvar}[3][\noaccents]{%
\expandafter\newcommand\expandafter{\csname #2\endcsname}{#1{#3}}%
\expandafter\newcommand\expandafter{\csname #2s\endcsname}{#1{\boldsymbol{#3}}}%
\expandafter\newcommand\expandafter{\csname #2smi\endcsname}[1][i]{#1{\boldsymbol{#3}}_{-##1}}%
\expandafter\newcommand\expandafter{\csname #2i\endcsname}[1][i]{#1{#3}_{##1}}%
\expandafter\newcommand\expandafter{\csname #2ith\endcsname}[1][i]{#1{#3}_{(##1)}}%
}
\newcommand{\ndist}{m}
\newcommand{\dist}{D}
\newcommand{\distfam}{\mathcal{D}}
\newcommand{\distj}[1][j]{\dist^{#1}}
\newcommand{\utilvcg}{\util^{\mathrm{SPA}}}
\newcommand{\nsample}{k}
\newcommand{\sample}{\mathbf s}
\newcommand{\lottery}{L}
\newcommand{\conddist}{F}
\newcommand{\dister}{\vec d}
\newcommand{\distdim}{d}
\newcommand{\base}{B}
\newcommand{\decomp}{\alpha}
\newcommand{\basekron}{C}
\newcommand{\veronese}{\nu}
\title{Optimal Auctions for Correlated Buyers with Sampling \thanks{N.\@ Haghpanah and J.\@ Hartline are supported by NSF CAREER Award CCF-0846113 and NSF CCF-1101717.  N.\@ Haghpanah is also supported by a Simons Graduate Award 285006.  R.\@ Kleinberg acknowledges support from NSF award AF-0910940, a Microsoft Research New Faculty Fellowship, and a Google Research Grant.}}
\author{
Hu Fu \\
Microsoft Research \\
New England Lab\\
{\tt hufu@microsoft.com}
\and
Nima Haghpanah \\
Northwestern University \\
Department of Eletrical Engineering \\
and Computer Science \\
\tt{nima.haghpanah@gmail.com}
\and 
Jason Hartline \\
 Northwestern University \\
Department of Electrical Engineering \\
and Computer Science \\
\tt{hartline@eecs.northwestern.edu}
\and
Robert Kleinberg \\
Cornell University \\
Department of Computer Science \\
\tt{rdk@cs.cornell.edu}
}
\begin{document}

\begin{titlepage}

\maketitle

\begin{abstract}

Cr\'emer and McLean [1985] showed that, when buyers' valuations are drawn from a
correlated distribution, an auction with full knowledge on the distribution can
extract the full social surplus.   We study whether this phenomenon persists
when the auctioneer has only incomplete knowledge of the distribution,
represented by a finite family of candidate distributions, and has sample access
to the real distribution.  We show that the naive approach which uses samples to
distinguish candidate distributions may fail, whereas an extended version of the
Cr\'emer-McLean auction simultaneously extracts full social surplus under each
candidate distribution.  With an algebraic argument, we give a tight bound on
the number of samples needed by this auction, which is the difference between
the number of candidate distributions and the dimension of the linear space they span.

\end{abstract}

\thispagestyle{empty}
\end{titlepage}

\section{Introduction}
\label{sec:intro}
Revenue maximization, a.k.a.\@ optimal auction design, is one of the most
studied topic in the literature of mechanism design.  The foundational work of
\citet{M81} gave the optimal auction for selling a single item when the
participating agents' values for the item are each drawn independently from
distributions that are known to the auctioneer.  As noted by \citeauthor{M81}
himself, the independence of the values is crucial for his auction's
optimality.   \citet{CM85} showed that, for correlated buyers, with a mild
condition on the joint distribution, there is in general a dominant strategy incentive
compatible auction that extracts full social surplus.  This means that,  the item is always sold to the bidder with the highest value, but the
utility of each bidder is zero in expectation, and all the value created by
the auction is extracted by the payments made to the auctioneer.

Both \citeauthor{M81}'s and \citeauthor{CM85}'s auctions are subject to the
criticism of the ``Wilson's principle'' \citep{W87}, which proposes that 
auction design should rely as little as possible on the details of the valuation
distributions, since these distributional prior knowledge can be expensive, if
not impossible, to acquire.  Recently, several \emph{prior-independent} auctions
were studied \citep[e.g.][]{DRY10, RTY12,
FHH13}.  These auctions greatly reduce the amount or precision of the knowledge
needed by the auctioneer on the prior distribution, while guaranteeing nearly
optimal performance against the fully knowledgeable optimal auction.  However,
they were all designed for independent value settings.  In this work, we pursue
prior-independent auction design for correlated buyers.

Typically, a prior-independent auction is an auction that is guaranteed to have
nearly optimal performance on a family of distributions.  To be more precise,
for each distribution of a given family, the auction extracts nearly as much
revenue as the optimal auction designed specifically for that distribution.  For
example, \citet{DRY10} showed that, with one sample from the underlying
distribution for each bidder, a single auction can extract a revenue nearly
optimal, and the family of distributions considered is that of regular distributions.

Can one use a similar method to emulate \citet{CM85}'s auction for correlated
settings, perhaps using samples to weaken the requirement on one's precise
knowledge on the value distributions?  We give an affirmative answer to this
question.

\begin{theorem}
\label{thm:internal}
Given any finite family~$\distfam$ of distributions satisfying the
condition of \citet{CM85}, a single auction with sample access to the underlying
distribution can extract full social surplus under each distribution
in~$\distfam$, where the
number of samples needed is one plus $|\distfam|$ (the number of
distributions in~$\distfam$),
minus the dimension of the linear space spanned by the distributions
in~$\distfam$.  
\end{theorem}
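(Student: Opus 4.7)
My approach is to extend Cr\'emer-McLean's construction by making each bidder's side payment a random function of $\nsample$ auxiliary samples drawn from the true distribution. The mechanism runs the VCG allocation and payments, and then charges each bidder $i$ who reports type $\vali$ an additional side payment $P_i(\vali, \valsmi, \sample)$ depending on $\vali$, the other bidders' types $\valsmi$, and $\nsample$ samples $\sample = (\sample_1, \ldots, \sample_\nsample)$. For each $\dist_j \in \distfam$, the condition of \citet{CM85} yields a fixed side payment $p_i^{(j)}(\vali, \valsmi)$ that, under $\dist_j$, extracts the full surplus and preserves incentive compatibility. I would choose the random payment $P_i$ so that, for every $j$ and every $(\vali, \valsmi)$,
\[
\Ex[\sample \sim \dist_j^{\otimes \nsample}]{P_i(\vali, \valsmi, \sample)} \;=\; p_i^{(j)}(\vali, \valsmi).
\]
Linearity of expectation then transfers the full-extraction equality and all incentive inequalities of $p_i^{(j)}$ to $P_i$ whenever the true distribution is $\dist_j$, which is exactly the desired conclusion.

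For each $(i, \vali, \valsmi)$ the display above is a system of $m = |\distfam|$ linear equations in the $|\typespace|^\nsample$ unknowns $\{P_i(\vali, \valsmi, s)\}_s$. It is solvable for arbitrary right-hand sides if and only if the $m$ tensors $\dist_1^{\otimes \nsample}, \ldots, \dist_m^{\otimes \nsample} \in \mathbb{R}^{\typespace^\nsample}$ are linearly independent. The theorem therefore reduces to an algebraic claim: the minimum $\nsample$ for which these tensor powers are linearly independent, in the worst case over families of $m$ distinct distributions spanning a $\distdim$-dimensional subspace, equals $1 + m - \distdim$. To attack it I would use the Veronese embedding. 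Fix a basis $\base_1, \ldots, \base_\distdim$ of $\mathrm{span}(\distfam)$ and write $\dist_j = \sum_{\ell} \decomp_{j,\ell} \base_\ell$; then $\dist_j^{\otimes \nsample}$ is identified with the degree-$\nsample$ Veronese image $\veronese_\nsample(\decomp_j)$ in the polynomial ring on $\distdim$ variables. The total-mass-one constraint places the coefficient vectors $\decomp_1, \ldots, \decomp_m$ on a codimension-one affine subspace, so they are distinct points whose affine span has dimension $\distdim - 1$.

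The main obstacle is the sufficient direction of this algebraic claim, since the $\decomp_j$'s need not be in general position within their affine span. I would prove by induction on $m$ the following: $m$ distinct points whose affine span has dimension $N$ have linearly independent degree-$\nsample$ Veronese images whenever $\nsample \geq m - N$. At each inductive step I pick an affine hyperplane $H$ maximizing the number $m_0 < m$ of $\decomp_j$'s it contains, and decompose any degree-$\nsample$ polynomial as (linear form defining $H$)\,$\times$\,(degree-$(\nsample-1)$ polynomial) plus a polynomial supported on $H$. Testing a hypothetical linear dependence $\sum_j c_j \veronese_\nsample(\decomp_j)=0$ against polynomials of the first form reduces the dependence to a degree-$(\nsample-1)$ Veronese problem on the $m - m_0$ points off $H$; against the second form it reduces to a degree-$\nsample$ Veronese problem on the $m_0$ points on $H$, which lie in an affine subspace of dimension at most $N-1$. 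The maximality of $H$ controls the affine dimensions of the two sub-instances so that the induction hypothesis applies to both. Tightness follows by placing the $\decomp_j$'s on an affine line (forcing $\distdim = 2$): the degree-$\nsample$ Veronese images then live in the $(\nsample+1)$-dimensional space of univariate polynomials of degree at most $\nsample$, so $\nsample \geq m - 1 = 1 + m - \distdim$ is necessary for linear independence.
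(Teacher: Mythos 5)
Your overall plan matches the paper's: run VCG (second-price) plus a lottery that depends on the other bidders' reports and on $\nsample$ internal samples, and choose the lottery so that a bidder's expected lottery payment matches her expected VCG surplus under each candidate distribution. You also correctly reduce solvability of the resulting linear system, via the Kronecker/tensor structure and the Cr\'emer--McLean condition, to the linear independence of the tensor powers $\distj^{\otimes \nsample}$, and then to a Veronese-variety statement; this is the paper's reduction as well.

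Two comments. First, a minor mechanism-design imprecision: you allow the side payment $P_i$ to depend on bidder $i$'s own report $\vali$ and solve a separate system for each $(\vali, \valsmi)$. This is unnecessary and weakens the incentive property of the mechanism to truthfulness in expectation over the internal randomness $\sample$. The Cr\'emer--McLean side payment $p_i^{(j)}$ depends only on $\valsmi$, so one should take $P_i = P_i(\valsmi, \sample)$ as well; the paper does this, and it preserves DSIC for free. With that fix your linear system for each $\valsmi$ is exactly solvable under the same condition that $\{\distj^{\otimes\nsample}\}_j$ be linearly independent. (Your system asks for the sample-average of $P_i$ to hit the CM payment for every $\valsmi$ exactly, which is a bit stronger than the paper's constraint of matching only the expected surplus, but the solvability condition coincides by the Kronecker lemma, so the difference is immaterial.)

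Second, on the core algebraic lemma, your argument is genuinely different from the paper's. The paper shows linear independence of the Veronese images $\veronese_\nsample(\vec\decomp_1),\dots,\veronese_\nsample(\vec\decomp_\ndist)$ by exhibiting, for each $j$, an explicit degree-$\nsample$ polynomial that vanishes at all the other images and not at $\veronese_\nsample(\vec\decomp_j)$: the product of one linear form killing $\distdim-1$ of the other points (chosen to complete $\vec\decomp_j$ to a basis) with $\ndist-\distdim$ more linear forms, one per remaining point. You instead prove an affine Veronese independence lemma by induction on $m$, splitting off a hyperplane $H$ that contains the maximum number of points, decomposing degree-$\nsample$ polynomials as $\ell_H\cdot(\text{degree }\nsample-1)$ plus a polynomial in the $H$-coordinates, and applying the induction to the off-$H$ and on-$H$ subproblems. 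This works, but you leave the dimension bookkeeping to the phrase ``the maximality of $H$ controls the affine dimensions''; you should verify (i) maximality forces the $m_0$ on-$H$ points to span a full $(N-1)$-dimensional affine subspace of $H$ (otherwise one could enlarge $H$), hence $m_0 \geq N$ and the on-$H$ subproblem has budget $\nsample \geq m_0 - (N-1)$; and (ii) then $m_0 + N_{\bar H} \geq N+1$ for the off-$H$ subproblem's budget $\nsample - 1 \geq (m-m_0) - N_{\bar H}$, handling the base case $m = N+1$ separately. Once spelled out, the induction closes and is a correct alternative to the paper's one-shot construction; the paper's product-of-linear-forms argument is shorter and more explicit, while your induction is closer in spirit to classical ``points in general position on the Veronese'' results. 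Your tightness argument (collinear $\decomp_j$'s) is the same as Proposition~\ref{prop:lb-sample}.
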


For example, when all distributions in the family are linearly independent, the
auction needs only one sample.  We also show that the number of samples needed
is tight (\autoref{prop:lb-sample}).

\paragraph{Main Difficulties.}

Unlike the independent value settings, for correlated bidders, the optimal
auctions that guarantee bidders non-negative utilities only in expectation
(i.e., interim individually rational) may extract a revenue much higher than any
auction which guarantee non-negative utilities with probability one (i.e.,
ex post individually rational).  In general, the social surplus can be extracted
only with the former looser constraint on the auction.
However, the expectation of a bidder's utility here is taken over the
distribution of the other bidders' values, on which the auctioneer has no accurate knowledge.  Therefore, the disadvantage imposed
by the inaccuracy of the auctioneer's knowledge is considerably greater here
than in the independent value settings.  We allow ourselves a finite family of
distributions to partly compensate for this disadvantage.

Since we consider a finite family of distributions, a naive way to make use of
the sample access
is trying to distinguish the distributions by what one sees in the samples, and
then running the auction optimal for the distribution which most likely produces the samples.  In \autoref{sec:external}, we show that this approach can be very
inefficient in its use of the samples.

\begin{theorem}
\label{thm:external}
For any positive integer~$\nsample$, there exist two distributions for two
bidders which satisfy
the Cr\'emer-McLean condition and which, with high probabilities, cannot be
distinguished by $\nsample$~samples.  Moreover, no single auction extracts any constant fraction of the optimal revenue under both distributions.
\end{theorem}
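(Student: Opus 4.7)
My plan is to (a) construct two distributions $D_1, D_2$ satisfying the Cr\'emer-McLean condition, (b) show $\nsample$ i.i.d.\ samples from either cannot distinguish them with constant probability, and (c) show no single BIC and interim-IR mechanism attains a constant fraction of the optimal revenue under both.

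For (a), I would take a binary type space $\{L, H\}$ with $L = 1$ and $H$ a parameter to be tuned. Let $D_j = (1 - \alpha) \delta_{(L, L)} + \alpha \cdot D_j^{\mathrm{rare}}$ with $\alpha = \Theta(1/\nsample)$, where $D_1^{\mathrm{rare}}$ and $D_2^{\mathrm{rare}}$ distribute the rare $\alpha$-mass among $(L, H), (H, L), (H, H)$ with different weightings, chosen so that (i)~each $D_j$ has linearly independent conditional distributions $D_j(t_{-i} \mid t_i)$ as $t_i$ varies (so the Cr\'emer-McLean condition holds), and (ii)~those conditionals differ noticeably between $j = 1$ and $j = 2$.

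For (b), since $\mathrm{TV}(D_1, D_2) \le \alpha = O(1/\nsample)$, sub-additivity of total variation under products gives $\mathrm{TV}(D_1^{\nsample}, D_2^{\nsample}) \le \nsample \alpha = O(1)$, which can be made at most $1/2$ by tuning the constant in $\alpha$; hence no statistical test based on $\nsample$ samples distinguishes the two distributions with probability above $3/4$.

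For (c), the key observation is that Cr\'emer-McLean full extraction under $D_j$ is governed by a linear system that pins down the interim payments in terms of $D_j$'s conditional distributions. A mechanism attaining a constant fraction of the optimal revenue under both $D_1$ and $D_2$ must approximately satisfy both systems simultaneously; because the conditionals differ, the combined system is over-determined, and its approximate solution forces a payment structure that conflicts with the Bayesian incentive-compatibility constraint for the high-type bidder under at least one distribution. Concretely, the Cr\'emer-McLean subsidy needed to extract the $L$-valued surplus from the shared profile $(L, L)$ under $D_1$ has a different sign/magnitude than the one needed under $D_2$, and the BIC constraint couples the two via the payment at $(L, L)$. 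Tuning $H$ large relative to $1/\alpha$ inflates the magnitude of this coupling conflict to $\Omega(H)$, which dominates the revenue budget and forces $\min(\mathrm{Rev}(D_1)/\mathrm{OPT}(D_1), \mathrm{Rev}(D_2)/\mathrm{OPT}(D_2)) \to 0$. The main obstacle is making this revenue-loss bound rigorous in the face of the mechanism's flexibility: randomized allocations, negative payments, and slack in interim IR spread across multiple types. The cleanest route I see is LP duality: write the LP that maximizes $\min(\mathrm{Rev}(D_1), \mathrm{Rev}(D_2))$ subject to the BIC and interim-IR constraints under both distributions, and analyze its dual to extract a revenue upper bound strictly below any constant fraction of $\min(\mathrm{OPT}(D_1), \mathrm{OPT}(D_2))$.
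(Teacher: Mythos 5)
Your indistinguishability argument in~(b) (sub-additivity of total variation under products) is correct and even a bit cleaner than the paper's coin-bias reduction, though quantitatively weaker ($\Omega(1/\eps)$ versus the paper's $\Omega(1/\eps^2)$). However, the construction in~(a) cannot support the revenue claim in~(c), which you also leave unproved. The core problem is that a binary type space cannot produce the required gap between full social surplus and ex-post-IR revenue. With $v_i\in\{L,H\}$, $L=1$, and mass $\alpha$ on profiles involving~$H$, the full surplus is $\Theta(1+\alpha H)$, while a second-price auction with reserve~$H$ --- a trivial DSIC, ex post IR mechanism requiring no distributional knowledge --- already earns $\Theta(\alpha H)$ under both $D_1$ and~$D_2$. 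Once $\alpha H=\Omega(1)$ this is a constant fraction of the optimum, so the claimed impossibility fails for every instance of your form. What you need is a marginal for which every posted price earns $O(1)$ while the expected value is unbounded; the paper uses precisely such a distribution, the equal-revenue distribution with $\Pr[v\ge t]=1/t$ truncated at~$h$, whose mean is $\Theta(\log h)$. Its two candidate distributions then differ only in an $\eps$-perturbation of which bidder receives the higher of two i.i.d.\ equal-revenue draws, which changes the conditionals while keeping each marginal equal-revenue.

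Part~(c) is the heart of the theorem, and you stop at a plan: you correctly diagnose the difficulties (randomization, negative transfers, slack in interim IR) and propose LP duality, but exhibit no dual certificates. The paper's argument does not need duality. By \autoref{lem:ic-char} any DSIC auction is a canonical ex post IR auction plus a lottery $\lottery_i(\vali[-i])$ for each bidder. Summing the interim IR constraints at the lowest type over $\distj[A]=A+\eps(B^\top-B)$ and $\distj[B]=A+\eps(B-B^\top)$ cancels the $\pm\eps$ perturbations and bounds the lottery's expected revenue under the symmetric product matrix~$A$ by a constant; the canonical auction's revenue is then bounded by twice the lookahead auction of \autoref{thm:lookahead}, which is $O(1)$ because $\distj[A]$ and $\distj[B]$ are entrywise within a $(1\pm\eps)$ factor of~$A$. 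You would need to carry out an argument at this level of precision --- or actually solve your proposed LP and verify the dual --- for the theorem to go through.
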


In contrast, the auction in \autoref{thm:internal} needs only one sample for
two such distributions.

Moreover, even when the auctioneer has some confidence on his guess of the
underlying distribution and runs an auction that is interim invidually rational
for that distribution, the auction may run into trouble in the event that 
his guess is wrong (such an event is particularly likely if the only source of
confidence is the samples).  This is because the
auction may turn out to be not invidiaully rational on the actual distribution,
and participants may see negative utilities in expectation.
In this case, one will have to make additional assumptions on bidders' behavior
in such scenarios for any meaningful analysis.  Our approach avoids this
problem: as long as the actual distribution is in the family~$\distfam$,
 our auction will be individually rational.

\paragraph{Our Techniques.}

Our auction is an extension of the auction by \citet{CM85} (called the
CM-auction in the sequel).  That auction first runs a second price auction on
the reported values, and then charges each bidder a payment (or pays her a reward) which is
determined solely by the other bidders' bids.  Since this payment does not
depend on the bidder's own strategy, it does not alter the incentive structure
of the auction.  These payments can be seen as the outcome of a lottery, whose
randomness comes only from the other bidders' values.  The lotteries 
are set up so that each bidder, conditional on any of her own values, makes an expected
payment in the lottery that is equal to her expected utility in the second
price auction.  In the independent value settings, this is not possible, because
the outcome of the lottery does not depend on the bidder's own value, whereas
her expected utility in the second price auction does.  In the case of
correlated values, this becomes possible if there is enough ``richness'' or
``variance'' in the conditional distributions of the other bidders' values as
the bidder's own value varies.  This ``richness'' is shown by \citeauthor{CM85}
to be the linear independence of the conditional distributions.

% In a sense, in the CM-auction, even though the auctioneer does not know a bidder's value (as the bidder does), he can set up the lotteries beforehand to exploit this information asymmetry, 

In our auction, we also first run a second price, and then decide for each bidder the
outcome of her lottery without using her own value.  The difference of our
auction from the
CM-auction is that this lottery outcome will depend not only on the other bidders'
values but also on the samples from the distribution.  Ideally, even though the
bidder's expected utility in the second price auction changes with both her own
value \emph{and} the underlying distribution, we hope to orchestrate the change
in the expected lottery outcome so as to match the utility change.  This requires
linear independence of all the distributions over the other bidders' values and
the samples, given each candidate distribution and the bidder's own value.
The main technical
difficulty of this work is to give a tight bound on the number of samples needed
for this linear independence.  As we show in the proof of
\autoref{thm:internal}, it boils down to showing a property for an object in
algebraic geometry known as the Veronese variety.

\paragraph{Structure of the paper.}

In \autoref{sec:external} we show the limit of the naive approach by giving a
proof for \autoref{thm:external}.  In \autoref{sec:internal} we prove
\autoref{thm:internal}, first describing our auction and then showing the bound on the number of samples for its revenue guarantee.

\subsection{Other Related Works}
\label{sec:related}

In correlated value settings with a known distribution, \citet{CM85} gives a
dominant strategy incentive compatible, interim individually rational auction
that extracts full social surplus, under a certain condition
(\autoref{def:CM-cond}) on the distribution.  Our work is an extension of this
auction.  The CM auction was extended by \citet{MR92} and \citet{Rahman10} to
continuous type spaces.

%% ex post ir

Another line of work studies the optimal auction for the same setting (with
known prior distributions), but under the stronger constraint of ex post
individual rationality.  \citet{PP11} showed that calculating the optimal
deterministic auction under this requirement is NP-hard, whereas \citet{DFK11}
showed that the optimal randomized auction can be computed in time polynomial in
the size of the distribution.  \citet{R01} developed a $2$-approximation for
the optimal revenue where the computation cost does not grow with the number of
bidders, and this approach was extended by \citet{DFK11} and \citet{CHLW11} for
better approximation ratios.  These auctions are particularly simple in form,
and we will use the auction by \citeauthor{R01} in our proof of
\autoref{thm:external}.  For the more general matroid settings, \citet{RT13}
characterized the optimal auction under various assumptions on the distribution, and
\citet{Li13} showed that a generalized VCG auction with conditional monopoly
reserve prices gives $e$-approximation to the optimal revenue for distributions having a correlated version of monotone hazard rate.

%% prior independent

There have been various studies on prior-independent revenue maximization
\citep[e.g.][]{DRY10,DHKN11,RTY12,FHH13}, although they all assume independent value distributions.  The most relevant to this work is
\citet{DRY10}'s single-sampling auction, which showed that with one sample from
each bidder's valuation distribution, the VCG auction with the samples as
reserve prices gives a $4$-approximation to the optimal revenue, when the
distributions are regular.  As an extention, \citet{RT13} gave a single-sampling mechanism for the more general interdependent value settings under various assumptions, although the benchmark is the optimal revenue under ex post individual rationality.  Recently, \citet{CHMS13} gave a prior-independent
mechanism optimizing a non-revenue objective, i.e., that of minimizing makespan for scheduling problems.

%% online

Online pricing \citep[e.g.\@][]{BBDS11, BDKS12, BKS13} is another setting where one has to maximize revenue but faces an
unknown underlying distribution, and where one can observe values drawn from it
(or partial information, e.g.\@ by observing the buyer's decision to take or
leave a certain price).  The difference between this and our ``batch''
setting is that the observations come over time, and one needs to
perform well not only in the last stage, but throughout the stages on average.  Also,
the buyers are typically assumed to have values (or types) drawn from the same
distribution, as opposed to from a correlated distribution we consider
here.

\section{Preliminaries}
\label{sec:prelim}
\paragraph{Auctions, Incentive Compatibility and Invidual Rationality.}

In this paper we consider the problem of auctioning one item to $n$ bidders
whose private valuations $\vali[1], \ldots, \vali[n]$ are drawn from an unknown
correlated distribution $\dist$.  Let $\typespacei$ be the support of
$\vali$, i.e., $\typespacei = \{\vali \given \exists \vali[-i], \dist(\vali,
\vali[-i]) > 0\}$.  Let $\typespace$ be the support of~$\dist$.  In this work we
consider only discrete distributions with finite supports.

By the revelation principle, it is without
loss of generality to consider auctions of the form that solicit bidders'
values and map them to an \emph{allocation} and a \emph{payment} for each
bidder.  The allocation $\alloci (\vali[1], \ldots, \vali[n]) \in [0, 1]$
denotes the probability with which agent~$i$ is allocated the item at the
reported value profile $(\vali[1], \ldots, \vali[n])$, and the payment
$\paymenti (\vali[1], \ldots, \vali[n])$ indicates the amount of money paid by
agent~$i$ at the valuation profile.  Feasibility of a single-item auction
requires that $\sum_{i} \alloci (\vali[1], \ldots, \vali[n]) \leq 1$, $\forall
\vali[1], \ldots, \vali[n]$.

The \emph{utility}~$\util_i$ of a bidder with value~$\vali$ at an outcome $\alloci$ and
$\paymenti$ is $\vali \alloci - \paymenti$.  An auction is said to be
\emph{dominant strategy incentive compatible (DSIC)}, if for all $i$, $\vali$, $\vali'$ and $\vali[-i]$, 
\begin{align*}
\vali \alloci(\vali, \vali[-i]) - \paymenti(\vali, \vali[-i]) \geq
\vali \alloci(\vali', \vali[-i]) - \paymenti(\vali', \vali[-i]).
\end{align*}

An auction is said to be \emph{ex post individually rational (IR)} if, for all
$i$, $\vali$, $\vali[-i]$,
\begin{align*}
\vali \alloci(\vali, \vali[-i]) - \paymenti(\vali, \vali[-i]) \geq 0.
\end{align*}
An auction is said to be \emph{interim individually rational} if, for all $i$
and~$\vali$,
\begin{align*}
\Ex[\val_{-i}]{\vali \alloci(\vali, \vali[-i]) - \paymenti(\vali, \vali[-i]) \given \vali} \geq 0,
\end{align*}
where $\vali[-i]$ is drawn from the conditional distribution given
$\vali$.

The \emph{revenue} of an auction is the sum of expected payments it collects
from all bidders.  In this paper we consider maximizing revenue extractable by
auctions subject to dominant strategy IC condition and ex post or interim IR condition.

\paragraph{Notations for Distributions.}

We will assume that the auctioneer is guaranteed that the valuation
distribution~$\dist$ is from a family~$\distfam$ of distributions, $\distfam
= \{\distj[1], \cdots, \distj[\ndist]\}$.  
\footnote{Without loss of generality we assume these distributions have the same support.}
For example, the auctioneer may have an accurate knowledge of the
distribution~$\dist$ on the value profiles $(\vali[1], \ldots, \vali[n])$, but
without knowing the mapping between the identities of the bidders in the auction
and the coordinates in the valuation profile.  In this case, the family
$\distfam$ will consist of at most $n!$ distributions, each of which is formed by performing a permutation on the coordinates in the profiles in~$\dist$.

There are multiple ways to represent a distribution.  In \autoref{sec:external}
we represent a distribution for $n$~bidders as an $n$-dimensional tensor.  In
particular, for two bidders, a distribution~$\dist$ is a $|\typespacei[1]|
\times |\typespacei[2]|$ matrix, with the entry $\dist(\vali[1], \vali[2])$ denoting the probability of the occurrence of $(\vali[1], \vali[2])$.  In
\autoref{sec:internal}, we will represent a distribution by a
$|\typespace|$-dimensional vector, with $\dist(\vali[1], \ldots, \vali[n])$
 being the probability of the occurrence of valuation profile $(\vali[1],
\ldots, \vali[n])$.  When $\typespace = \prod_{i} \typespacei$, the latter is
simply the vectorization of the former representation.

The probability of a valuation profile~$\vali[-i]$ conditioning on bidder~$i$'s
value being~$\vali$ is $\dist(\vali[-i] \given \vali)$.  We use $\dist_{i,
\vali}$ to denote the conditional distribution on $\vali[-i]$ given $\vali$.  We
represent it as the $|\typespacei[-i]|$-dimensional vector, where $\dist_{i, \vali, \vali[-i]}$ is $\dist(\vali[-i] \given \vali)$.

\paragraph{Optimal Auctions For A Known Distribution.} 

We will need two existing results on revenue maximization with correlated bidders, under constraints of interim IR and ex post IR, respectively.

\citet{CM85} showed that, under a fairly lenient condition on the value
distribution, the optimal mechanism under DSIC and interim IR can extract the full
social surplus.  In other words, the auction maximizes the social welfare and
always allocates the item to the bidder with the highest value, whereas in
expectation every bidder's utility is zero.

\begin{definition}
\label{def:CM-cond}
A valuation distribution~$\dist$ is said to satisfy the \emph{Cr\'emer-McLean
condition} if, for each bidder~$i$, the $|\typespacei|$ vectors $\{\dist_{i,
\vali}\}_{\vali \in \typespacei}$ are linearly independent.
\end{definition}

\begin{theorem}[\citealp{CM85}]
\label{thm:CM}
In a single item auction where the valuation distribution satisfies the
Cr\'emer-McLean condition, there is an interim IR, DSIC auction that
extracts the full social surplus.
\end{theorem}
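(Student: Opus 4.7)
The plan is to augment the efficient second-price auction with an extra value-independent ``side payment'' for each bidder, engineered so that the side payment's expectation, conditioned on that bidder's own type, exactly equals her interim utility in the second-price auction. Because the side payment does not depend on her own report, it is incentive-neutral; because its conditional expectation cancels her interim utility, her interim utility in the augmented auction is identically zero and the auctioneer collects the entire social surplus. The Cr\'emer--McLean condition will turn out to be exactly what is required for such a side payment to exist.

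Concretely, fix a bidder $i$ and write $\utilvcg_i(\vali)$ for her interim utility at type $\vali$ in the efficient second-price auction, i.e.\ $\utilvcg_i(\vali) = \Ex[\val_{-i}]{\vali\, \alloci^{\mathrm{SPA}}(\vali,\vali[-i]) - \paymenti^{\mathrm{SPA}}(\vali,\vali[-i]) \given \vali}$. I would look for a function $\lottery_i \colon \typespacei[-i] \to \mathbb{R}$, regarded as a $|\typespacei[-i]|$-dimensional vector, satisfying
\begin{align*}
\langle \dist_{i,\vali},\, \lottery_i \rangle \;=\; \utilvcg_i(\vali) \qquad \text{for every } \vali \in \typespacei.
\end{align*}
This is a linear system of $|\typespacei|$ equations in $|\typespacei[-i]|$ unknowns. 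The Cr\'emer--McLean condition states precisely that the coefficient rows $\{\dist_{i,\vali}\}_{\vali \in \typespacei}$ are linearly independent, so the coefficient matrix has full row rank and a solution $\lottery_i$ always exists (unique up to the kernel).

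Given such $\lottery_i$ for every $i$, I would define the final mechanism by retaining the efficient second-price allocation rule and setting
\begin{align*}
\paymenti(\vali,\vali[-i]) \;=\; \paymenti^{\mathrm{SPA}}(\vali,\vali[-i]) + \lottery_i(\vali[-i]).
\end{align*}
Dominant-strategy incentive compatibility is inherited from the second-price auction because $\lottery_i(\vali[-i])$ is independent of bidder $i$'s report: it shifts all of $i$'s utilities by the same constant in every $\vali[-i]$-slice, so truthful bidding remains dominant. Interim individual rationality holds with equality, since by construction
\begin{align*}
\Ex[\val_{-i}]{\vali\, \alloci(\vali,\vali[-i]) - \paymenti(\vali,\vali[-i]) \given \vali} \;=\; \utilvcg_i(\vali) - \Ex[\val_{-i}]{\lottery_i(\vali[-i]) \given \vali} \;=\; 0.
\end{align*}
Because every bidder's interim utility is zero, the expected revenue equals the expected social surplus of the efficient allocation, which is the full social surplus.

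The only nontrivial step is the solvability of the linear system above, and the Cr\'emer--McLean condition is tailored precisely to guarantee it. The remaining steps---DSIC, interim IR, and the revenue identity---follow immediately once $\lottery_i$ is in hand, so I would not expect any further obstacle.
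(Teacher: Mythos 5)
Your proposal is correct and is essentially the standard Cr\'emer--McLean construction that the paper cites (Theorem~\ref{thm:CM} is stated without proof, attributed to \citet{CM85}) and then generalizes in Section~\ref{sec:lottery}: the paper's own CM auction with samples (Definition~\ref{def:cm-sample} and Proposition~\ref{prop:cm-icir}) reproduces exactly your three-step logic---run the second-price auction, solve a linear system whose rows are conditional distributions to obtain a side-payment vector depending only on the other bidders' reports (and samples), and conclude DSIC from report-independence, interim IR from the exact cancellation, and full surplus extraction from zero interim utility. You have correctly identified that the Cr\'emer--McLean condition is precisely the full-row-rank condition needed to solve $\langle \dist_{i,\vali}, \lottery_i\rangle = \utilvcg_i(\vali)$ for all $\vali$, so there is no gap.
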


We will call the auction in \autoref{thm:CM} the CM auction.

\citet{R01} studied an DSIC, ex post IR \emph{lookahead} auction that
$2$-approximates the optimal revenue.  The auction first solicits all values,
then singles out the highest bidder, and runs the optimal auction for this
bidder, with the value distribution conditioning on all other bidders' values
and the fact that her value is above all others'.  

\begin{theorem}[\citealp{R01}]
\label{thm:lookahead}
The lookahead auction is DSIC, ex post IR and extracts at least half of
the optimal revenue.  
\end{theorem}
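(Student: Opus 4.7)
The plan is to verify the three claims separately: (i) dominant strategy incentive compatibility, (ii) ex post individual rationality, and (iii) the 2-approximation in revenue. The first two are structural and almost immediate from the fact that once the identity of the highest reporter $i^*$ is fixed, the auction run on $i^*$ is effectively a posted price whose value depends only on the reports $\vali[-i^*]$. The real work is in the revenue comparison, which I would approach by slicing the expectation of the optimal revenue by the identity of the highest bidder and by the realized values of the other bidders, and then bounding each slice.

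For DSIC, I would argue case by case. If bidder $i$ reports truthfully and is the highest, she faces a take-it-or-leave-it price depending only on $\vali[-i]$, so truthfulness is optimal; if she underreports she can only lose the item, so her utility does not improve. If $i$ is not the true highest bidder but misreports to become the highest, the posted price offered to her is computed from the distribution conditional on $\vali>\max_{j\neq i} \vali[j]$, so this price is at least $\max_{j\neq i} \vali[j]$, which exceeds her true value; she rejects and gets zero. Ex post IR then follows because in every realized profile either a bidder faces a posted price (and declines if unfavorable) or pays nothing and receives nothing.

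For the 2-approximation I would fix any DSIC, ex post IR auction $M$ and condition on the event ``bidder $i$ is the highest reporter'' together with the realization of $\vali[-i]$. Let $\vali[(2)]$ denote the second highest value. I would write the conditional revenue of $M$ as the sum of $i$'s expected payment plus the expected payments of the remaining bidders. By ex post IR and the fact that at most one bidder wins, the payments from bidders other than $i$ are bounded by $\vali[(2)]$ in every realization. Meanwhile, viewing $M$ as a function of $\vali$ alone (with $\vali[-i]$ frozen), dominant-strategy IC and IR in $\vali$ imply via the standard single-agent Myerson bound that the expected payment from $i$ is at most the optimal single-buyer revenue on the conditional distribution of $\vali$ given $\vali[-i]$ and $\vali>\vali[(2)]$, which is exactly the revenue the lookahead auction collects in this slice. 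Since the lookahead revenue in this slice is also at least $\vali[(2)]$ (take the posted price $\vali[(2)]$, always accepted), the optimal revenue in the slice is at most twice the lookahead revenue in the slice; summing over $i$ and taking expectation over $\vali[-i]$ gives the factor of $2$.

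The main obstacle I anticipate is carefully justifying the Myerson-style bound on bidder $i$'s expected payment under $M$, because the ``distribution'' of $\vali$ being used is the conditional one given $\vali[-i]$ and $\vali>\vali[(2)]$, not the prior; one has to verify that with $\vali[-i]$ frozen, $M$ induces a well-defined single-agent mechanism to which the standard monotonicity-plus-payment-identity argument applies, and that in particular the ex post IR constraint of $M$ implies the interim IR constraint needed for the single-agent bound on this conditional distribution. Once this is established, the decomposition and the two-term bound combine cleanly to yield the stated $2$-approximation.
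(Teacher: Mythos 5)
The paper does not prove this theorem; it is stated as a black-box citation to \citet{R01} and invoked later only in the proof of \autoref{thm:lb-external}, so there is no in-paper proof to compare against. Your reconstruction is correct and is, in substance, Ronen's own argument: partition by the identity of the highest reporter and the realization of the others' values; bound the aggregate payment of the non-highest bidders by the second-highest value, using ex post IR together with feasibility ($\sum_{j\neq i}\paymenti[j]\le \sum_{j\neq i}\vali[j]\alloci[j]\le \valith[2]\sum_{j\neq i}\alloci[j]\le \valith[2]$); bound the highest bidder's payment by the optimal single-buyer revenue on the induced conditional distribution; and note that the lookahead auction achieves exactly that single-buyer optimum, which is itself at least $\valith[2]$, giving the factor $2$.

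Two small clarifications. In the DSIC step, ``she rejects and gets zero'' should be read as: a direct mechanism offers no accept/reject choice, but either the report falls below the posted price (no allocation, zero utility) or above it (allocation at a price exceeding her true value, nonpositive utility), so the deviation is weakly dominated. And the obstacle you flag at the end is not really an obstacle, nor is any Myerson machinery needed: with $\vali[-i]$ frozen, the induced single-agent rule $(\alloci(\cdot,\vali[-i]),\paymenti(\cdot,\vali[-i]))$ is DSIC and \emph{pointwise} (ex post) IR, hence IR under every conditional distribution on $\vali$; it is therefore one feasible candidate in the maximization defining the optimal single-buyer revenue for that conditional distribution, which is all the comparison requires.
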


\paragraph{The Equal Revenue Distribution.}

In several examples we will make use of the following \emph{equal revenue}
distribution (truncated at~$h$): the valuation $\val$ takes on integers between
$1$ and~$h$, and the probability that $\val \geq k$ is equal to $\tfrac 1 k$.
The equal revenue distribution has the property that, the expectation of the
value is $\Omega(\log h)$, which grows unboundedly as $h$ grows large, but
the optimal revenue one can extract from it in a single-agent setting is $1$.

\paragraph{Kronecker Products.} 

Notations in \autoref{sec:internal} will be greatly shortened by the use of Kronecker
products on matrices (and vectors).  The Kronecker product of matrices $A =
(a_{ij}) \in \mathbb R^{m \times n}$ and $B = (b_{ij}) \in \mathbb R^{p \times
q}$ is the $mp \times nq$ block matrix
\begin{align*}
A \otimes B = \left[
\begin{array}{ccc}
a_{11} B & \cdots & a_{1n} B \\
\vdots  & \ddots & \vdots \\
a_{m1} B & \cdots & a_{mn} B
\end{array}
\right].
\end{align*}

Kronecker products are bilinear and associative, but in general are not
commutative.  We will use $(\otimes A)^k$ to denote the Krocker product of $k$
copies of~$A$.  When performing Kronecker products on an $m$-dimensional vector,
we will treat it as an $m \times 1$ matrix.  The following lemma is not hard to
verify.

\begin{lemma}
\label{lem:kronecker-lin-ind}
Consider a set of linearly independent vectors $S = \{v_1, \ldots v_m\}$
and, for each $i = 1, \ldots, m$, a set $T_i$ of
linearly independent vectors.  The set of vectors $\cup_{i = 1, \cdots, m} \{u
\otimes v_i\}_{u \in T_i}$ are linearly independent.  In particular, for any positive
integer~$k$, the set of $m^k$ vectors $\{u_1 \otimes
\ldots \otimes u_k\}_{u_i \in S}$, are linearly independent.
\end{lemma}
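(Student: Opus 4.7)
The plan is to first establish the general statement, and then obtain the ``in particular'' statement by a straightforward induction on~$k$.

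For the general statement, suppose we have a vanishing linear combination
\begin{align*}
\sum_{i=1}^m \sum_{u \in T_i} c_{i,u} \, (u \otimes v_i) = 0.
\end{align*}
By bilinearity of the Kronecker product, this can be rewritten as $\sum_{i=1}^m w_i \otimes v_i = 0$, where $w_i = \sum_{u \in T_i} c_{i,u} u$. The first key step is to show that $w_i = 0$ for every $i$. To do this, I would exploit the linear independence of $\{v_1, \ldots, v_m\}$: pick linear functionals $\phi_1, \ldots, \phi_m$ on the ambient space of the $v_i$'s that are dual to them, i.e.\ satisfy $\phi_i(v_j) = \delta_{ij}$ (such functionals exist because linearly independent vectors can be extended to a basis). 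Then the linear map $I \otimes \phi_i$, applied to the identity $\sum_j w_j \otimes v_j = 0$, yields $\sum_j \phi_i(v_j)\, w_j = w_i = 0$. Concretely, this amounts to taking a suitable linear combination of the blocks of the Kronecker product vector.

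Once $w_i = 0$ is established, the conclusion is immediate: by the linear independence of the vectors in $T_i$, all the coefficients $c_{i,u}$ vanish, proving the first part.

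For the second ``in particular'' statement, I would induct on~$k$. The base case $k=1$ is the given linear independence of $S$. For the inductive step, assume the set $T = \{u_1 \otimes \cdots \otimes u_k : u_j \in S\}$ is linearly independent; then apply the first part of the lemma, taking the ``outer'' vectors $v_i$ to range over $S$ and the sets $T_i$ all equal to~$T$. This gives linear independence of $\{(u_1 \otimes \cdots \otimes u_k) \otimes u_{k+1} : u_j \in S\}$, which by associativity is exactly the desired set for $k+1$. I do not anticipate a real obstacle: the whole argument rests on the simple dual-functional extraction trick, which is clean and essentially coordinate-free.
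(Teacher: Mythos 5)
The paper states this lemma without proof, remarking only that it ``is not hard to verify,'' so there is nothing to compare against. Your argument is correct and complete: the dual-functional step (equivalently, reading off the blocks of the Kronecker vector and using linear independence of the $v_i$'s) gives $w_i = 0$, after which linear independence of each $T_i$ kills the coefficients; and the induction for the $k$-fold statement is a clean instantiation of the first part with all $T_i$ equal. One tiny terminological slip: in $u \otimes v_i$ it is $u$, not $v_i$, that sits ``outside'' in the block structure (each entry of $u$ scales a copy of $v_i$), so calling the $v_i$ the ``outer'' vectors is backwards --- but this has no bearing on the validity of the proof.
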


% A linear form $f: (y_1, \ldots, y_\distdim) \mapsto \beta_1 y_1 + \cdots
% \beta_\distdim y_\distdim$ can be seen as a vector in $\mathbb R^{\distdim}$,
% and therefore we can also define Kronecker products of linear forms in the
% obvious way.

\section{The Limit of the Naive Approach}
\label{sec:external}
% \section{The Limits of Auctions With External Samples}
% \label{sec:external-sample}

The most naive approach given sample accesses is to use the samples to
distinguish distributions in the given family~$\distfam$ of distributions, and
then run an optimal auction for the identified distribution.  However, even with
the knowledge of~$\distfam$, the auctioneer may still need a large number of
samples to even distinguish the distributions with constant confidence of being
right, let alone tailor an auction for the identified distribution.  Consider the following example.

\begin{example}
\label{ex:coin}
Fix a small positive real number~$\eps < 1$.
Consider two bidders whose values are generated by the following process: two
values are independently drawn from the equal revenue distribution, then with
probability $1 - \eps$, the two values are randomly assigned to the two bidders;
with probability $\eps$, the higher of the two values is assigned to bidder~$1$,
and the lower to bidder~$2$.  Call the resulting correlated
distribution~$\distj[A]$.  Define another distribution~$\distj[B]$ by exactly the same procedure but flipping the identity of the two bidders.
\end{example}

\begin{proposition}
\label{prop:coin}
It takes $\Omega(1/\eps^2)$ samples to correctly distinguish $\distj[A]$ and
$\distj[B]$ in Example~\ref{ex:coin} with constant probability.
\end{proposition}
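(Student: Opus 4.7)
The plan is to apply the standard information-theoretic lower bound for two-hypothesis testing: if the auctioneer correctly identifies the underlying distribution with probability at least $1/2 + \Omega(1)$ from $\nsample$ samples, then the total variation distance $d_{\mathrm{TV}}\bigl((\distj[A])^{\otimes \nsample}, (\distj[B])^{\otimes \nsample}\bigr)$ must be $\Omega(1)$. By tensorization of KL divergence along with Pinsker's inequality, it then suffices to show $\nsample \cdot D_{\mathrm{KL}}(\distj[A] \,\|\, \distj[B]) = \Omega(1)$, so the whole problem reduces to the single-sample bound $D_{\mathrm{KL}}(\distj[A] \,\|\, \distj[B]) = O(\eps^2)$.

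First I would write both distributions explicitly. Letting $p$ denote the truncated equal-revenue density and $p_{\mathrm{tie}} = \sum_{\val} p(\val)^2 < 1$, unrolling the two-step generative process yields $\distj[A](\vali[1], \vali[2]) = (1+\eps)\, p(\vali[1])\, p(\vali[2])$ when $\vali[1] > \vali[2]$, $(1-\eps)\, p(\vali[1])\, p(\vali[2])$ when $\vali[1] < \vali[2]$, and $p(\vali[1])^2$ when $\vali[1] = \vali[2]$; the distribution $\distj[B]$ is identical with the roles of $\vali[1]$ and $\vali[2]$ swapped. Ties contribute nothing to the KL divergence; using $\sum_{\vali[1] > \vali[2]} p(\vali[1]) p(\vali[2]) = (1 - p_{\mathrm{tie}})/2$ and combining the two remaining sums gives $D_{\mathrm{KL}}(\distj[A] \,\|\, \distj[B]) = \eps\,(1 - p_{\mathrm{tie}}) \log \tfrac{1+\eps}{1-\eps} = 2\eps^2 (1 - p_{\mathrm{tie}}) + O(\eps^4) = O(\eps^2)$.

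Second, with the per-sample KL bound in hand, Pinsker's inequality combined with the chain rule gives $d_{\mathrm{TV}}\bigl((\distj[A])^{\otimes \nsample}, (\distj[B])^{\otimes \nsample}\bigr) \leq \sqrt{\tfrac{1}{2}\, \nsample \cdot D_{\mathrm{KL}}(\distj[A] \,\|\, \distj[B])} = O\bigl(\eps \sqrt{\nsample}\bigr)$. Any test that distinguishes the two distributions with success probability bounded strictly above $1/2$ by a positive constant forces the left-hand side to be at least a positive constant, yielding $\nsample = \Omega(1/\eps^2)$.

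The main (and only mild) obstacle is ensuring the per-sample KL bound is $O(\eps^2)$ rather than $O(\eps)$; an $O(\eps)$ bound would give only the weaker lower bound $\nsample = \Omega(1/\eps)$. The quadratic-in-$\eps$ behavior arises because $\distj[A]$ and $\distj[B]$ are symmetric $\pm\eps$ perturbations of the same symmetric base distribution: the first-order $O(\eps)$ contributions to KL from the $\vali[1] > \vali[2]$ and $\vali[1] < \vali[2]$ halves of the support have opposite signs and cancel, leaving only the quadratic term.
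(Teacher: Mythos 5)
Your proof is correct, and it takes a somewhat different route from the paper's, although both are standard information-theoretic lower bounds at heart. The paper's argument first coarsens each sample to the single bit $\mathds{1}[\vali[1] > \vali[2]]$ (with a fair coin flip on ties), observes that under $\distj[A]$ this is a $\Theta(\eps)$-biased coin toward Head and under $\distj[B]$ a $\Theta(\eps)$-biased coin toward Tail, and then invokes a known $\Omega(1/\eps^2)$ sample-complexity lower bound for distinguishing biased coins; the data-processing direction (coarsening can only make distinguishing harder) is implicit. You instead compute $D_{\mathrm{KL}}(\distj[A]\,\|\,\distj[B])$ directly on the full bivariate distribution, show it is $\Theta(\eps^2)$, and close via tensorization of KL and Pinsker's inequality. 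Both are sound. The paper's reduction is slicker and reuses a black box, while also isolating the conceptual source of the hardness (only the sign of $\vali[1]-\vali[2]$ carries information about which distribution you are in, since the magnitudes of the values are identically distributed under both). Your version is self-contained and makes explicit the cancellation that drives the $O(\eps^2)$ rather than $O(\eps)$ KL --- the two halves of the support contribute first-order terms of opposite sign --- which is the same fact that, in the paper's framing, underlies the biased-coin lower bound it cites. One small phrasing nit: ``it suffices to show $\nsample\cdot D_{\mathrm{KL}}=\Omega(1)$'' should read ``it is necessary that $\nsample\cdot D_{\mathrm{KL}}=\Omega(1)$''; your second paragraph states the logic in the correct direction, so the conclusion stands.
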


\begin{proof}
We can simulate a biased coin with $\distj[A]$ as follows: draw a pair of values
$(\vali[1], \vali[2])$ from~$\distj[A]$, and if $\vali[1] > \vali[2]$, return
Head; if $\vali[1] < \vali[2]$, return Tail; if $\vali[1] = \vali[2]$, return
Head and Tail with probability $\tfrac 1 2$ each.  It is not hard to see that
the resulting distribution over Heads and Tails is that of an $\eps$-biased coin
in favor of Head.  The same simulation using $\distj[B]$ will give a
distribution of an $\eps$-biased coin in favor of Tail.  By standard information
theoretic argument \citep[see, e.g. Theorem~6.1 in][]{KK07}, we know that it takes $\Omega(\eps^2)$ flips of a coin to
distinguish an $\eps$-biased coin in favor of Heads or Tails.  Therefore one
needs at least as many samples to distinguish $\distj[A]$ and~$\distj[B]$.
\end{proof}

It is not hard to verify that $\distj[A]$ and~$\distj[B]$ satisfy the Cr\'emer-McLean condition,
and one can extract full social surplus which is $\Omega(\log h)$.  Now we
show that, without being able to distinguish the two, one auction cannot
simultaneously be interim IR and approximates the optimal revenue within a
constant factor under both distributions.

\begin{theorem}
\label{thm:lb-external}
There is no auction that is interim IR and gets more than $O(1)$ revenue under
both distributions in Example~\ref{ex:coin} .
\end{theorem}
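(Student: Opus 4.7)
The plan is to exploit the algebraic identity $\tfrac12(\distj[A]+\distj[B])=\distj[0]$, where $\distj[0]$ denotes the independent product of two equal-revenue distributions. One verifies this identity by unpacking the generative processes: the ``favor bidder $1$'' and ``favor bidder $2$'' outcomes use the same unordered pair $\{V_1,V_2\}$ of i.i.d.\ equal-revenue draws but with opposite orderings, so averaging them yields exactly the uniform random assignment. Hence the $\eps$-perturbations in $\distj[A]$ and $\distj[B]$ cancel and the mixture collapses to $\distj[0]$.

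Next I show that any DSIC, interim IR auction under both $\distj[A]$ and $\distj[B]$ is also interim IR under $\distj[0]$, and I bound its revenue on $\distj[0]$. For the IR transfer, Bayes' rule applied to the mixture identity writes the conditional $\distj[0]_{i,\vali}$ as a convex combination of $\distj[A]_{i,\vali}$ and $\distj[B]_{i,\vali}$ (with weights proportional to the respective bidder-$i$ marginals), so bidder $i$'s interim utility under $\distj[0]$ is a convex combination of its non-negative interim utilities under $\distj[A]$ and $\distj[B]$, and is itself non-negative. (DSIC is a distribution-free pointwise condition and transfers trivially.) Since the two bidders are independent under $\distj[0]$, a standard per-bidder reduction applies: averaging out $\vali[-i]$ turns the mechanism into a DSIC, IR single-agent mechanism on the equal-revenue distribution, whose revenue is at most $1$; summing over the two bidders, $R_{\distj[0]}\leq 2$.

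Finally, since expected revenue is a linear functional of the distribution (given the fixed payment rule), $R_{\distj[0]}=\tfrac12(R_{\distj[A]}+R_{\distj[B]})$, so $R_{\distj[A]}+R_{\distj[B]}\leq 4$, and at least one of them is bounded by $2$---contradicting the supposition that both exceed a sufficiently large constant. The main subtlety is the IR transfer: the bidder-$i$ marginals under $\distj[A]$ and $\distj[B]$ do \emph{not} individually coincide with that of $\distj[0]$ (only the joint-distribution mixture identity holds), so $\distj[0]_{i,\vali}$ is an unequally-weighted convex combination of $\distj[A]_{i,\vali}$ and $\distj[B]_{i,\vali}$; this is still enough to preserve non-negativity and is the crucial step that lets the clean product-distribution bound go through.
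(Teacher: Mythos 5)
Your proposal is correct and takes a genuinely different --- and in my view cleaner --- route than the paper's. The paper decomposes the \emph{mechanism}: by Lemma~\ref{lem:ic-char} any DSIC auction is the ``canonical'' ex-post-IR auction plus bidder-independent side payments $\lottery_i(\vali[-i])$; it then bounds the lottery revenue by summing the row-$1$ interim IR constraints under $\distj[A]$ and $\distj[B]$ (the $\eps$-perturbations cancel, leaving a constraint on $r_1(A)\lottery_1$), and separately bounds the canonical part via Ronen's lookahead theorem applied to the nearly-product distribution. You instead decompose the \emph{distributions}: $\tfrac12(\distj[A]+\distj[B])=\distj[0]$, the independent product of equal-revenue marginals, and your three observations --- DSIC is distribution-free, interim IR is preserved under the Bayes'-rule-weighted convex combination of conditionals, and revenue is linear in the distribution --- reduce everything to the single-agent Myerson bound on $\distj[0]$. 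Your weighting step is exactly right and is the crux: since $\distj[0]_i(\vali)$ generally differs from $\distj[A]_i(\vali)$ and $\distj[B]_i(\vali)$, the conditional $\distj[0]_{i,\vali}$ is an \emph{unequally} weighted convex combination with weights $\distj[A]_i(\vali)/\bigl(\distj[A]_i(\vali)+\distj[B]_i(\vali)\bigr)$ and its complement, and nonnegativity of each interim utility is all you need. The per-bidder reduction also correctly relies on $\distj[0]$ being a product (so that averaging out $\vali[-i]$ yields a well-defined single-agent BIC, IR mechanism independent of the report). Your approach yields the sharper explicit bound $R_{\distj[A]}+R_{\distj[B]}\le 4$, avoids the lookahead auction machinery entirely, and generalizes to any pair of distributions whose average is a product --- whereas the paper's canonical/lottery decomposition is the tool it reuses elsewhere and gives more granular information about where the revenue is lost.
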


Together with \autoref{prop:coin}, this theorem implies \autoref{thm:external}.
Before proving the theorem, we first give a characterization of dominant
strategy IC
mechanisms, which can be easily shown by standard arguments.  We omit its proof.

\begin{lemma}
\label{lem:ic-char}
Given a value distribution, any two dominant strategy IC auctions with the same
allocation rule differ from each other only by a payment from each bidder~$i$
that depends only on~$\vali[-i]$.
\footnote{We allow the payment to be negative, in which case we pay the bidder.}
\end{lemma}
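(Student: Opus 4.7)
The plan is to deduce the claim from the standard Myerson-style payment identity for single-parameter DSIC mechanisms. Fix any bidder $i$ and any report profile $\vali[-i]$ of the other bidders, and view each mechanism restricted to bidder $i$ as a single-parameter DSIC mechanism with common allocation $a(v) := \alloci(v, \vali[-i])$ and payments $q^k(v) := p^k_i(v, \vali[-i])$ for $k = 1, 2$.

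First I would write down the two symmetric IC inequalities between any pair of values $v, v' \in \typespacei$, namely $v\, a(v) - q^k(v) \geq v\, a(v') - q^k(v')$ and its analog with $v, v'$ swapped. Adding them shows that $a$ is monotone nondecreasing in $v$. Then the standard derivation of Myerson's payment identity gives
\[
q^k(v) \;=\; v\, a(v) - \int^{v} a(t)\, \mathrm{d}t + C^k(\vali[-i]),
\]
so that $q^k$ is pinned down by the allocation rule $a$ up to an additive constant $C^k(\vali[-i])$ that may depend on the other bidders' reports but not on $v$. Because $M_1$ and $M_2$ share the allocation rule, the allocation-dependent terms cancel upon subtraction, leaving
\[
p^1_i(v, \vali[-i]) - p^2_i(v, \vali[-i]) \;=\; C^1(\vali[-i]) - C^2(\vali[-i]),
\]
which is a function of $\vali[-i]$ alone.

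Applying the same reasoning to every bidder $i$ and every $\vali[-i]$ in the support completes the proof. The argument is a direct invocation of Myerson's single-parameter payment characterization, so there is no substantive obstacle; the only mild point of care is verifying the payment identity for the discrete finite type space used in the paper, which follows from the same telescoping of adjacent IC inequalities as in the continuous case.
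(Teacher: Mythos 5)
The paper itself omits the proof (``We omit its proof''), so there is no in-paper argument to compare against; the Myerson-payment-identity route you take is certainly the natural ``standard argument'' the authors have in mind. However, the step you flag as ``a mild point of care'' is in fact where the argument breaks down, and it is worth being precise about why. For discrete finite type spaces --- which is exactly the setting of this paper --- telescoping the adjacent IC inequalities does \emph{not} pin down the payment up to a $\vali[-i]$-dependent constant. With $\vali[-i]$ fixed and $a(\cdot) = \alloci(\cdot,\vali[-i])$ monotone, adjacent IC constraints between consecutive types $v_{j-1} < v_j$ yield only the two-sided bound
\begin{align*}
v_{j-1}\bigl(a(v_j)-a(v_{j-1})\bigr) \;\le\; q(v_j)-q(v_{j-1}) \;\le\; v_j\bigl(a(v_j)-a(v_{j-1})\bigr),
\end{align*}
and whenever $a(v_j) > a(v_{j-1})$ this interval is nondegenerate. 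Concretely, take $\typespacei=\{1,2\}$ with $a(1)=0$, $a(2)=1$: the two payment rules $q^1(1)=0,\,q^1(2)=1$ and $q^2(1)=0,\,q^2(2)=2$ are both DSIC with identical allocation, yet $q^1 - q^2$ depends on $\vali$. So the claim ``$q^k$ is pinned down by $a$ up to an additive constant $C^k(\vali[-i])$'' does not hold, and the integral formula you import from the continuous case has no discrete analogue strong enough to support it.

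This means the gap is not just in your write-up but in the lemma as literally stated for finite type spaces; the paper glosses over the same issue by omitting the proof. If you wanted to repair the argument in the spirit of how the lemma is actually used downstream (in the corollary and in the proof of Theorem~\ref{thm:lb-external}), the right move is not to claim uniqueness of payments but to argue \emph{one-sidedly}: for any DSIC auction, the canonical auction (lowest type zero utility, ex post IR) collects at least as much from each bidder at each profile up to a $\vali[-i]$-dependent shift, so the revenue decomposition into ``canonical revenue plus $\vec 1^\top \dist \lottery_i$'' still gives the upper bound the theorem needs. Alternatively, state the lemma for continuous type spaces (or for allocation rules that make the Myerson telescoping tight), where your proof is correct as written.
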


 For a fixed allocation rule, we call an auction
\emph{canonical} if it has the allocation rule, is dominant strategy IC, ex post IR and if
any bidder having a value lowest in her type space always has zero utility.
Given Lemma~\ref{lem:ic-char}, we can describe any dominant strategy IC auction
by the difference between it and the canonical one.

\begin{corollary}
Any dominant strategy IC auction can be fully described in a standard form by its
allocation rule and $n$ vectors $\lottery_1 \in \mathbb R^{|\typespacei[-1]|}$, 
$\cdots, \lottery_n \in \mathbb R^{|\typespacei[-n]|}$.  To run this auction,
one first runs the canonical auction with the same allocation rule, and then
charges each bidder~$i$ the amount $\lottery_i(\vali[-i])$ when the other bidders bid $\vali[-i]$.
\end{corollary}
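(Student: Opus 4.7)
The plan is to apply \autoref{lem:ic-char} to the given DSIC auction and the canonical auction that shares its allocation rule. First I would verify that a canonical auction exists for the given allocation rule. Since the given auction is DSIC, its allocation rule is implementable in dominant strategies. By a standard Myerson-style argument in this single-parameter (single-item) setting, fixing the allocation rule, the payment rule is determined up to an additive function of $\vali[-i]$ for each bidder~$i$; requiring that the bidder with the lowest value in her type space always has zero utility pins down the remaining degree of freedom, and the resulting monotone-allocation payment formula automatically yields ex post IR. Hence the canonical auction with the given allocation rule exists and is unique.

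Next I would take any DSIC auction $A$ with allocation rule~$\alloc$ and compare it to the canonical auction $C$ having the same allocation rule. By \autoref{lem:ic-char}, the payment of bidder~$i$ in $A$ differs from the payment in $C$ only by a function $\lottery_i : \typespacei[-i] \to \mathbb{R}$ depending solely on $\vali[-i]$. Since $\typespacei[-i]$ is finite (as $\dist$ is a discrete distribution with finite support), each such function can be identified with a vector in $\mathbb{R}^{|\typespacei[-i]|}$, whose entry at index $\vali[-i]$ is $\lottery_i(\vali[-i])$.

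Finally, to establish that this gives the claimed \emph{standard form}, I would note that the mapping from $(\alloc, \lottery_1, \ldots, \lottery_n)$ back to a DSIC auction is automatic: running the canonical auction with allocation~$\alloc$ and then charging each bidder~$i$ the additional amount $\lottery_i(\vali[-i])$ preserves DSIC, since the extra charge is independent of $\vali$. Thus the construction is a bijection between DSIC auctions with allocation rule~$\alloc$ and tuples of vectors $(\lottery_1, \ldots, \lottery_n)$ in the appropriate Euclidean spaces.

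The main conceptual point — and the only non-obvious step — is the existence and uniqueness of the canonical auction. This requires knowing that in this single-item setting any implementable allocation rule admits a DSIC, ex post IR extension in which the lowest-type bidder has zero utility. Once this is in place, the rest of the argument is just a direct application of \autoref{lem:ic-char} combined with the identification of a function on the finite set~$\typespacei[-i]$ with a vector in $\mathbb{R}^{|\typespacei[-i]|}$.
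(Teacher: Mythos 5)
The paper gives no explicit proof of this corollary, treating it as an immediate consequence of \autoref{lem:ic-char} together with the definition of the canonical auction. Your proof supplies exactly the argument the paper intends — apply \autoref{lem:ic-char} to the given auction and the canonical one sharing its allocation rule, then identify the difference function on the finite set $\typespacei[-i]$ with a vector in $\mathbb{R}^{|\typespacei[-i]|}$ — and additionally fills in a detail the paper leaves implicit: that a canonical auction actually exists and is unique for any DSIC-implementable allocation rule, which you correctly justify via the Myerson payment identity (monotonicity, payment determined up to a $\vali[-i]$-dependent constant, zeroing out the lowest type, ex post IR following from the envelope condition). This is correct and matches the intended argument.
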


\begin{proof}[Proof of \autoref{thm:lb-external}]
Fix a dominant strategy IC auction that is interim IR under both $\distj[A]$
and~$\distj[B]$ in Example~\ref{ex:coin}.  Let $\lottery_1 \in \mathbb
R^{|\typespacei[2]|}$ and~$\lottery_2 \in \mathbb R^{|\typespacei[1]|}$ be the
vectors of payments describing the auction's payments in addition to the
canonical auction.  Let $\dister$ denote the equal revenue distribution
truncated at~$h$, i.e., $\dister = [\tfrac 1 2, \tfrac 1 6, \cdots,
\tfrac{1}{(h-1)h}, \tfrac{1}{h}]^\top$.  Deviating from the rest of the paper, in
this proof we will use a $h\times h$ matrix~$\dist$ to
represent a joint distribution for the two bidders, with $\dist(\vali[1],
\vali[2])$ representing the probability of the profile $(\vali[1], \vali[2])$.
The distribution where each bidder's value is drawn independently from the equal
revenue distribution is then represented by the symmetric rank-one matrix $A = \dister
\dister^\top$.  Let $B$ be the upper triangle matrix whose diagonal elements are
half of $A$'s, and whose elements above the diagonal are identical with~$A$.
Then $A = B + B^\top$.  Also, $\distj[A]$ is $(1 - \eps) A + 2 \eps B^\top = A + \eps(B^\top - B)$, and $\distj[B]$ is $A +
\eps (B - B^\top)$.  

We consider the revenue our auction gets from bidder~$1$ under~$\distj[A]$.  It
first gets the revenue as in the canonical auction, and then in addition, it
gets $\vec 1^\top \cdot [A + \eps (B^\top - B)] \lottery_1$, where $\vec 1^\top$
is the $h$-dimensional all-one row vector.  Under
$\distj[B]$, this additional revenue is $\vec 1^\top \cdot [A + \eps (B - B^\top)]
\lottery_1$.  The sum of these two terms is $\vec 1^\top \cdot A \lottery_1$,
which we show has to be small.  

Let $r_1(M)$ denote the first row of a matrix~$M$.  Recall that the entries of $r_1(A + \eps
(B^\top - B))$ correspond to the probabilities in~$\distj[A]$ of the value profiles where
$\vali[1] = 1$.  By interim IR, we have 
\begin{align*}
r_1(A + \eps (B^\top - B)) \cdot \lottery_1 = r_1(A) \lottery_1 + r_1(B^\top -
B) \cdot \lottery_1 \leq 1.
\end{align*}
Similarly, for $\distj[B]$ we have
\begin{align*}
r_1(A) \lottery_1 + r_1(B - B^\top) \cdot \lottery_1 \leq 1.
\end{align*}
Adding the two inequalities, we get $r_1(A) \lottery_1 \leq 2$.  Now recall that $A$ represents the independent distribution, and $r_1(A)$ is simply $\tfrac 1 2 \dister^\top$.
Hence $\dister^\top \cdot \lottery_1 \leq 4$.  The other rows of~$A$ are also scaled copies of $\dister$,
and the scaling coefficients sum up to~$1$.  Hence, $\vec 1^\top
A \lottery_1 \leq 4$.  Therefore, in addition to the canonical auction, the
total sum of our auction's extra revenue from 
bidder~$1$ in the two distributions is bounded by~$4$.  The same argument works
for bidder~$2$ as well.  In other words, our auction cannot extract
substantially more revenue than the canonical auction for both distributions
simultaneously.  Therefore,
to finish the proof, we only need to show that the canonical auction also
extracts only a small revenue.  

To show this, we invoke the lookahead auction.  By \autoref{thm:lookahead}, the
revenue of any DSIC, ex post IR auction, including that of the canonical
auction, is bounded by twice the revenue of the lookahead auction.  Recall that
the lookahead auction for two bidders uses the lower bidder's value to set a
conditionally optimal price for the higher bidder.  Given any distribution
represented by a $h \times h$ matrix~$\dist$, when bidder~$1$'s value
being~$\vali[1]$ and bidder~$2$'s value is higher, the optimal price for
bidder~$2$ is determined by the part of $\vali[1]$-th row of~$\dist$ to the
right of the diagonal element.  If we set a price of $\payment_{1, \vali[1]}
\geq \vali[1]$ in
this scenario, in expectation we collect a revenue of 
\begin{align*}
\payment_{1, \vali[1]} \sum_{\val \geq \payment_{1, \vali[1]}} \dist(\vali[1], \val).
\end{align*}
We can do this for every~$\vali[1]$, and symmetrically for bidder~$2$ using the
columns of $\dist$.  The revenue of the lookahead auction can be expressed as
\begin{align*}
\max_{
\substack{\payment_{1, 1} \geq 1, \\
% \payment{1, 2} \geq 2, \\
\cdots \\
\payment_{1, h} \geq h.
}
}
\sum_{\vali[1] = 1}^h \payment_{1, \vali[1]} \sum_{\val \geq \payment_{1,
\vali[1]}} \dist(\vali[1], \val) +
\max_{
\substack{\payment_{2, 1} > 1, \\
\cdots \\
\payment_{2, h - 1} > h - 1.
}
}
 \sum_{\vali[2] = 1}^h \payment_{2, \vali[2]}
\sum_{\val \geq \payment_{2, \vali[2]}} \dist(\val, \vali[2]).
\end{align*}
Now we substitute $\dist$ by $\distj[A] = A + \eps(B^\top - B)$.   It is not
hard to verify that the diagonal
elements of $\distj[A]$ are the same as~$A$, and for any $\vali[1] >
\vali[2]$, $\distj[A](\vali[1], \vali[2]) = (1 + \eps) A(\vali[1], \vali[2])$,
and for any $\vali[1] < \vali[2]$, $\distj[A](\vali[1], \vali[2]) = (1 - \eps)
A(\vali[1], \vali[2])$.  Therefore, the revenue of the lookahead auction is
upper bounded by $(1 + \eps)$ times the following quantity:
\begin{align*}
\max_{
\substack{\payment_{1, 1} \geq 1, \\
\cdots \\
\payment_{1, h} \geq h.
}
}
\sum_{\vali[1] = 1}^h \payment_{1, \vali[1]} \sum_{\val \geq \payment_{1,
\vali[1]}}  A(\vali[1], \val) +
\max_{
\substack{\payment_{2, 1} > 1, \\
\cdots \\
\payment_{2, h - 1} > h - 1.
}
}
 \sum_{\vali[2] = 1}^h \payment_{2, \vali[2]}
\sum_{\val \geq \payment_{2, \vali[2]}}  A(\val, \vali[2]).
\end{align*}

This quantity, however, is exactly the revenue of the lookahead auction for the
independent distribution~$A$, which in turn is known to be no more than~$2$
(since the revenue of Myerson's optimal auction is no more than~$2$; another way
to see this is that for independent distributions, the optimal revenue for two
bidders cannot be greater than the sum of optimal revenue extractable from each
alone).  This completes the proof.
\end{proof}

\section{The Power of Auctions with Internal Samples}
\label{sec:internal}
% \section{Auctions with Internal Samples}

In contrast to the limit we have shown for the naive approach, we consider an
extension of the CM-auction in this section and prove \autoref{thm:internal}.

\begin{definition}
\label{def:cm-sample}
The \emph{CM auction with samples} works as follows:
\begin{enumerate}
\item Run the second price auction, which allocates the item to the highest bidder and
charges her a payment equal to the second highest bid.
\item Draw $\nsample$ valuation profiles $\sample_1, \ldots, \sample_\nsample$,
each independently from the underlying distribution.
\item \label{step:lottery} For each bidder, including those who do not win the
item, charge her or pay her an amount of money that is a function of the other
bidders bids $\vali[-i]$ \emph{and} the samples $\sample_1, \ldots,
\sample_\nsample$.  These functions and $\nsample$, the number of samples
needed, are to be specified later.
\end{enumerate} 
\end{definition}

The difference between the CM auction with samples and the CM auction is the sampling procedure and the dependence of the lottery outcome on
the samples.  We now discuss setting up the lotteries outcomes in
Step~\eqref{step:lottery}, and the number of samples we need.  The former is an
extension of the CM auction, whereas the latter involves nontrivial algebraic investigations.

\subsection{Lottery Outcomes From Solving Linear Systems}
\label{sec:lottery}

The construction of the lotteries in Step~\eqref{step:lottery} of
\autoref{def:cm-sample} aims at extracting from bidder~$i$ the utility she would
get in a pure second price auction, no matter what distribution we are under.  This boils
down to solving a linear system, as is the case in \citet{CM85}.

For each bidder~$i$, we construct a vector $\utilvcg_i$ in $\mathbb
R^{|\typespacei| \times \ndist}$, where $\utilvcg_{i, \vali, j}$ is the expected
utility of bidder~$i$ in the second price auction under distribution $\distj$ and
conditioning on that bidder~$i$ has value~$\vali$.  (Recall that $\ndist$ is the
number of distributions in~$\distfam$.)  We draw $\nsample$~samples
$\sample_1, \ldots, \sample_\nsample$ from the underlying distribution, where
each sample $\sample_j$ is a profile of values $(\sample_{j1}, \ldots,
\sample_{jn})$.  We would like to decide on an amount to pay or charge
bidder~$i$ given $\vali[-i]$ and $\sample_1, \ldots, \sample_\nsample$.  So we
use a vector $\lottery_i \in \mathbb R^{|\typespacei[-i]| \times
|\typespace|^\nsample}$ to denote these quantities, where $\lottery_{i,
\vali[-i], \sample_1, \ldots, \sample_\nsample}$ is the amount of money we
charge or pay to bidder~$i$, when the other bidders bid~$\vali[-i]$ and when the
samples are $\sample_1, \ldots, \sample_\nsample$.  To compute the expected
payment under~$\lottery$, we need a distribution over the events that
$\vali[-i], \sample_1, \ldots, \sample_\nsample$ occur.  Importantly, this
distribution varies with both the underlying distribution~$\distj$ \emph{and}
bidder~$i$'s own value~$\vali$.  Therefore we have $|\typespacei| \cdot \ndist$
vectors $\conddist^{\vali, j}_i$ in $\mathbb R^{|\typespacei[-i]| \times
|\typespace|^\nsample}$, where $\conddist^{\vali, j}_{i, \vali[-i], \sample_1,
\ldots, \sample_{\nsample}}$ is the probability that the bidders other than~$i$
bid $\vali[-i]$ and that the $\nsample$~samples are $\sample_1, \ldots,
\sample_\nsample$, under the joint distribution~$\distj$ \emph{and} conditioning
on bidder~$i$'s own value being~$\vali$.  The expected payment that bidder~$i$
with value~$\vali$ makes in Step~\eqref{step:lottery} under
distribution~$\distj$  is then equal to $\conddist^{\vali, j}_i \cdot
\lottery_i$.

\begin{proposition}
\label{prop:cm-icir}
The CM auction with samples is DSIC.  In addition, if, given a
family of distributions $\distfam = \{\distj[1], \cdots, \distj[\ndist]\}$, for each bidder~$i$, the system of linear equations 
\begin{align}
\conddist^{\vali, j}_i \cdot \lottery_i = \utilvcg_i, \quad \forall \vali \in
\typespacei, j \in [\ndist]
\end{align}
has a solution $\lottery_i^*$, then using $\lottery^*_i$ for bidder~$i$ in
Step~\eqref{step:lottery} of \autoref{def:cm-sample} makes the auction interim
IR and extracts full social surplus under each distribution~$\distj \in
\distfam$.
\end{proposition}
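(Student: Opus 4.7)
The plan is to dispatch the two claims separately, using only the construction of the auction and the defining property of the vectors $\conddist^{\vali,j}_i$ and $\utilvcg_i$. For DSIC, I would observe that the second-price auction in Step~1 is DSIC on its own, and that the additional payment applied to bidder~$i$ in Step~\ref{step:lottery} depends on $\vali[-i]$ and on the samples $\sample_1,\ldots,\sample_\nsample$ but not on bidder~$i$'s reported value $\vali$. Because the samples are drawn from the underlying distribution independently of agents' reports, adding any payment that is independent of $\vali$ cannot change bidder~$i$'s best response; this is essentially an application of \autoref{lem:ic-char}.

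For the interim IR and full-surplus claims, fix any distribution $\distj \in \distfam$, bidder $i$, and value $\vali \in \typespacei$. By the definition of $\utilvcg_i$, the quantity $\utilvcg_{i,\vali,j}$ is bidder~$i$'s expected utility in the second-price auction conditional on her value being $\vali$ and the underlying distribution being $\distj$. By the definition of $\conddist^{\vali,j}_i$, bidder~$i$'s conditional expected payment under $\lottery_i^*$ in Step~\ref{step:lottery} is $\conddist^{\vali,j}_i \cdot \lottery_i^*$, and the hypothesis of the proposition says this equals $\utilvcg_{i,\vali,j}$. Hence her total conditional expected utility is
\begin{equation*}
\utilvcg_{i,\vali,j} - \conddist^{\vali,j}_i \cdot \lottery_i^{*} \;=\; \utilvcg_{i,\vali,j} - \utilvcg_{i,\vali,j} \;=\; 0,
\end{equation*}
which establishes interim individual rationality with equality.

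To conclude full surplus extraction under $\distj$, I would combine the pointwise identity (welfare) $=$ (payments) $+$ (bidders' utilities) with the fact that Step~1 allocates to the highest bidder, so the expected welfare equals the expected highest value under~$\distj$. Taking expectations over $\distj$ and summing over $i$, the zero-utility calculation just shown forces the expected revenue to coincide with the expected welfare. There is no real technical obstacle here; the content of the proposition is precisely that the linear system was designed so that the two expectations cancel on every conditioning event, and the genuine difficulty (bounding $\nsample$ so that the system admits a solution) is deferred to the next subsection.
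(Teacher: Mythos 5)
Your proof is correct and follows essentially the same argument as the paper: DSIC because the lottery payment does not depend on bidder $i$'s own bid, interim IR with equality because the hypothesis forces $\conddist^{\vali,j}_i \cdot \lottery_i^* = \utilvcg_{i,\vali,j}$ for every $\vali$ and $j$, and full surplus extraction because the highest bidder always wins while every bidder's expected utility is zero. The only cosmetic difference is that you cite \autoref{lem:ic-char} for the (easy) direction that adding a bid-independent payment preserves DSIC, whereas the paper just states it directly.
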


\begin{proof}
The second price auction itself is DSIC, and in Step~\eqref{step:lottery} of
\autoref{def:cm-sample}, the extra payment (or award) the bidder makes (or
receives) is not affected by her own bid, the auction remains DSIC.

Now fixing any distribution $\distj \in \distfam$, and conditioning on
bidder~$i$ having value~$\vali$, the bidder's utility from the first two steps
will be her conditional utility in a second price auction, i.e., $\utilvcg_{i, \vali,
j}$.  Her extra payment in Step~\eqref{step:lottery} will be in expectation
equal to $\conddist^{\vali, j}_i \cdot \lottery_i^*$, which by definition of
$\lottery^*_i$ is equal to $\utilvcg_{i, \vali, j}$.  This shows that the bidder
has expected utility zero no matter which $\distj \in \distfam$ it is and no
matter what her own value is.  Therefore the auction is interim IR.  As the item
is always allocated to the highest bidder, the auction extracts the full social
surplus.
\end{proof}

We now investigate conditions that allow us to solve the linear systems
$\conddist^{\vali, j}_i \cdot \lottery_i = \utilvcg_i$.  From this point on we
will focus on the problem on a fixed bidder, and will drop the subscripts~$i$.  
In general, there are no linear constraints governing the entries of the vector
$\utilvcg$, because it is calculated with both the probabilities in the
distribution \emph{and} the magnitude of the valuations.  This means
that, to be able to solve the linear equations, in general we need
$\{\conddist^{\vali, j}\}_{\vali, j}$ to be linearly independent.

By the independence of each sampling, we have 
\begin{align*}
\conddist^{\vali, j}_{\vali[-i], \sample_1, \ldots, \sample_\nsample} =
\distj_{\vali}(\vali[-i]) \cdot \distj(\sample_1) \cdots
\distj(\sample_\nsample),
\end{align*}
 (recall that $\distj_{\vali}(\vali[-i])$ is the
conditional probaiblity $\distj(\vali[-i] \given \vali)$), therefore
\begin{align}
\conddist^{\vali, j} = \distj_{\vali} \otimes (\otimes \distj)^{\nsample}.
\label{eq:conddist}
\end{align}  
By the bilinearity of Kronecker products, in
order to have $\{\conddist^{\vali, j}\}_{\vali \in \typespacei}$ to be linearly
independent even for a fixed~$j$, we need $\{\distj_{\vali}\}_{\vali}$ to
be linearly independent, which amounts to the Cr\'emer-McLean condition
(\autoref{def:CM-cond}) on~$\distj$.

From this point on we will assume that each $\distj \in \distfam$ satisfies the
Cr\'emer-McLean condition, and we look at the number of samples needed to make
$\{\conddist^{\vali, j}\}_{\vali, j}$ linearly independent.

\subsection{Upper Bounds on the Number of Samples Needed}
\label{sec:ub-sample}

We next show the main theorem in this section.

\begin{theorem}
\label{thm:ub-sample}
If each distribution $\distj \in \distfam$ satisfies the Cr\'emer-McLean
condition, and if the $\ndist$ vectors $\{\distj\}$ spans a linear space of
dimension~$\distdim$, then with $\nsample = \ndist - \distdim + 1$ samples,  the
set of vectors $\{\conddist^{\vali, j}_i\}_{\vali, j}$ are linearly independent, for each bidder~$i$.
\end{theorem}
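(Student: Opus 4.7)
The plan is to split the proof into two parts: a Kronecker-product reduction to a linear-independence statement about the tensor powers $(D^j)^{\otimes k}$, and a projective-geometric argument that establishes the latter using properties of the Veronese variety.

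For the first part, I would start from the identity $F^{v,j} = D^j_v \otimes (D^j)^{\otimes k}$ of \eqref{eq:conddist} and apply \autoref{lem:kronecker-lin-ind} with outer set $S = \{(D^j)^{\otimes k}\}_{j=1}^{m}$ and, for each $j$, inner set $T_j = \{D^j_v\}_{v \in T_i}$. Linear independence of each $T_j$ is precisely the Cr\'emer-McLean condition on $D^j$ (for bidder $i$), which is hypothesised; linear independence of $S$ is the real content to be shown. Granted both, \autoref{lem:kronecker-lin-ind} delivers the linear independence of $\{F^{v,j}\}_{v,j}$ at once.

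For the second part, the reduced claim is: if $D^1, \ldots, D^m$ are distinct probability distributions spanning a $d$-dimensional subspace $V$, then $(D^1)^{\otimes k}, \ldots, (D^m)^{\otimes k}$ are linearly independent for $k = m - d + 1$. By the standard duality between the Veronese variety $\nu_k(\mathbb{P}(V))$ and degree-$k$ forms on $V$, this is equivalent to producing, for each index $i$, a degree-$k$ form $f_i$ vanishing at $D^j$ for every $j \ne i$ and not vanishing at $D^i$. I would construct $f_i$ as a product of $k$ linear forms chosen by a greedy hyperplane-covering argument. A basis-swap, using that $\{D^j\}_j$ spans $V$, produces a subset $\{D^i, D^{j_1}, \ldots, D^{j_{d-1}}\}$ forming a basis of $V$: take any basis extracted from $\{D^j\}_j$, and if $D^i$ is missing, exchange it for any basis element whose coefficient in the expansion of $D^i$ is nonzero. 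The unique hyperplane $H_1$ of $V$ through $D^{j_1}, \ldots, D^{j_{d-1}}$ then avoids $D^i$, and its defining linear form $\ell_1$ kills those $d-1$ distributions at once. The remaining $m - d$ distributions are picked off one by one: for each such $D^j$, its non-proportionality with $D^i$ yields a linear form $\ell_j$ vanishing at $D^j$ but not at $D^i$. The product $f_i = \ell_1 \cdot \prod_j \ell_j$ has degree $1 + (m - d) = k$ and the required vanishing pattern.

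The source of the tight count $k = m - d + 1$, and the step that requires the most care, is the basis-swap. This is the only place where the hypothesis that $\{D^j\}_j$ spans $V$ is used, and it is what lets one of the $k$ linear factors absorb an entire hyperplane's worth ($d-1$) of distributions while still avoiding $D^i$; the remaining $k - 1 = m - d$ factors then each need only knock out a single distribution. Weakening the spanning hypothesis would require correspondingly more linear factors (more samples), matching the intuition that distributions spanning a larger space are easier to separate.
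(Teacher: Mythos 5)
Your proposal is correct and follows essentially the same route as the paper: reduce via \autoref{lem:kronecker-lin-ind} to linear independence of $\{(\distj)^{\otimes \nsample}\}_j$, interpret this as linear independence of Veronese images, and for each $j$ exhibit a degree-$\nsample$ form (a product of $\nsample$ linear forms, one killing a hyperplane's worth of $\distdim-1$ distributions and the remaining $\ndist-\distdim$ each killing one) that separates $\veronese(\distj)$ from the rest. The only presentational difference is that the paper makes the reduction to the Veronese picture explicit via a Kronecker-product expansion in a chosen basis (which it then reuses for the lower bound), whereas you cite the $\mathrm{Sym}^k(V)$/degree-$k$-forms duality as a black box and spell out the Steinitz exchange that the paper merely asserts; both are sound.
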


With Proposition~\ref{prop:cm-icir}, we immediately have the following
corollary.

\begin{corollary}
\label{cor:ub-sample}
Under the condition in \autoref{thm:ub-sample}, the CM auction with $\nsample =
\ndist - \distdim + 1 $ 
samples is DSIC, interim IR, and extracts full social surplus under each
distribution $\distj \in \distfam$.
\end{corollary}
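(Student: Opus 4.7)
The plan is to reduce the claim to an algebraic statement about the Veronese variety and then prove that statement by induction on $\ndist$. Starting from the tensor identity $\conddist^{\vali, j} = \distj_{\vali} \otimes (\otimes \distj)^{\nsample}$ given in~\eqref{eq:conddist}, \autoref{lem:kronecker-lin-ind} lets us factor the question of linear independence of $\{\conddist^{\vali, j}\}_{\vali, j}$ into two subquestions: (i) for each fixed $j$, linear independence of $\{\distj_{\vali}\}_{\vali}$, which is exactly the Cr\'emer--McLean condition assumed on $\distj$; and (ii) linear independence of the $\ndist$ tensor powers $\{(\otimes \distj)^{\nsample}\}_{j=1}^{\ndist}$. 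Thus the theorem reduces to proving~(ii) for $\nsample = \ndist - \distdim + 1$.

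For~(ii) I take the dual point of view: the tensors $\{(\otimes \distj)^{\nsample}\}_{j}$ are linearly independent iff, for each $j$, there exists a homogeneous polynomial $f_j$ of degree $\nsample$ on the ambient space with $f_j(\distj[i]) = \delta_{ij}$. Such an $f_j$ defines a linear functional on symmetric tensors via $v^{\otimes \nsample} \mapsto f_j(v)$, and applying it to any putative linear dependence among the $(\otimes \distj)^{\nsample}$ forces the $j$-th coefficient to vanish. Restricting to the $\distdim$-dimensional span $V$ of $\distfam$, this is exactly the assertion that $\ndist$ distinct points in $\mathbb{P}(V) \cong \mathbb{P}^{\distdim - 1}$ that jointly span $\mathbb{P}(V)$ remain linearly independent after the degree-$\nsample$ Veronese embedding, precisely when $\nsample = \ndist - \distdim + 1$.

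I will construct the $f_j$'s by induction on $\ndist$. The base case is $\ndist = \distdim$ (so $\nsample = 1$): here $\distj[1], \ldots, \distj[\distdim]$ are themselves linearly independent, and the $f_j$'s are the corresponding dual basis of linear forms. For the inductive step $\ndist > \distdim$, the $\ndist$ distributions are linearly dependent, so I can pick an index $i^* \neq j$ with $\distj[i^*]$ lying in the span of the remaining distributions --- possible because any nontrivial linear dependence has at least two nonzero coefficients, and at most one is indexed by~$j$. Dropping $\distj[i^*]$ leaves $\ndist - 1$ distributions still spanning $V$. Choose a linear form $L$ with $L(\distj[i^*]) = 0$ and $L(\distj) \neq 0$, which exists because two distinct probability distributions cannot be scalar multiples of each other and hence represent distinct projective points. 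Applying the induction hypothesis to the remaining $\ndist - 1$ distributions yields a polynomial $g$ of degree $(\ndist - 1) - \distdim + 1 = \ndist - \distdim$ with $g(\distj[i]) = 0$ for $i \notin \{i^*, j\}$ and $g(\distj) \neq 0$. Setting $f_j = L \cdot g$, suitably normalized, produces the required polynomial of degree exactly $\ndist - \distdim + 1$.

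The main obstacle is keeping the degree bound tight. A naive Lagrange interpolation using $\ndist - 1$ separate linear forms (one per non-$j$ point) would produce a polynomial of degree $\ndist - 1$, which exceeds the target $\ndist - \distdim + 1$ whenever $\distdim \geq 3$. The savings come from exploiting the $\distdim$-dimensional structure: at each inductive step I peel off only one redundant point using a single linear form, but the induction hypothesis supplies a polynomial of degree only $\ndist - \distdim$ on the remaining $\ndist - 1$ points, so the degree budget advances by exactly $1$. The delicate part is ensuring both that a non-$j$ redundant point always exists and that its removal preserves the $\distdim$-dimensional span; the linear-dependence observation above handles both simultaneously and is what keeps the argument tight.
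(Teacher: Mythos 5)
Your argument is correct, but note that what you have written is in substance an alternative proof of \autoref{thm:ub-sample}; the corollary itself is immediate once the theorem is in hand, via \autoref{prop:cm-icir} (linear independence of the vectors $\conddist^{\vali,j}$ makes the linear system solvable, and the proposition then delivers DSIC, interim IR, and full surplus extraction), and your proposal never actually closes this loop --- you should add the one sentence invoking \autoref{prop:cm-icir}. As for the linear-independence argument itself: both your proof and the paper's reduce, via \autoref{lem:kronecker-lin-ind}, to showing that the Veronese images $\veronese(\vec\decomp_1),\dots,\veronese(\vec\decomp_\ndist)$ are linearly independent, and both do so by exhibiting, for each $j$, a product of $\ndist-\distdim+1$ linear forms vanishing at every $\vec\decomp_{j'}$ with $j'\neq j$ but not at $\vec\decomp_j$. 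The paper's construction is direct: extend $\vec\decomp_j$ to a basis by $\distdim-1$ of the other points, annihilate those $\distdim-1$ with a \emph{single} linear form, then spend one additional form per remaining point. Yours is the same idea organized as an induction: each step removes one redundant point (one that lies in the span of the rest, chosen from a nontrivial dependence relation with a nonzero coefficient away from index $j$) at the cost of one linear form, and the base case, where the $\distdim$ surviving points form a basis, contributes the final form via the dual basis. The degree savings in both versions come from the same place --- killing $\distdim-1$ of the non-$j$ points with a single form --- so this is a presentation choice rather than a substantive difference; your framing trades the paper's up-front choice of co-basis vectors for the extra (correctly handled) bookkeeping that removing a redundant point preserves the $\distdim$-dimensional span.
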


\begin{proof}[Proof of \autoref{thm:ub-sample}]
By \eqref{eq:conddist} and Lemma~\ref{lem:kronecker-lin-ind}, as we have the
Cr\'emer-McLean condition, it suffices to show that the $\ndist$~vectors $\{(\otimes \distj)^\nsample\}_{j}$ are linearly independent.

Let $\{\base_1, \cdots, \base_{\distdim}\}$ be a basis of the linear space
spanned by $\{\distj\}_j$.  Then for each~$j$, we can write $\distj$ as a linear
sum of these vectors: $\distj = \sum_{\ell = 1}^{\distdim} \decomp_{j \ell}
\base_{\ell}$.  Since each $\distj$ is a distribution, its entries sum to one.
Therefore, no two $\decomp_j$ and $\decomp_{j'}$ are scalar copies of each
other, i.e., there are no $j \neq j'$ such that $\decomp_{j\ell} = \zeta
\decomp_{j' \ell}$ for each $\ell$, for some~$\zeta$.

We consider the Kronecker product $(\otimes \distj)^\nsample$.  By bilinearity,
\begin{align*}
\left( \otimes \distj \right)^{\nsample} 
& = \left( \otimes \sum_{\ell = 1}^\distdim \decomp_{j\ell} \base_\ell \right)^{\nsample} \\
& = \sum_{\substack{\ell_1 + \dots + \ell_{\distdim} = \nsample, \\
\ell_1, \cdots, \ell_{\distdim} \geq 0}} \decomp_{j1}^{\ell_1}
\decomp_{j2}^{\ell_2} \ldots \decomp_{j\distdim}^{\ell_\distdim} 
% {\nsample \choose\ell_1, \cdots, \ell_\distdim}
\basekron_{\ell_1, \cdots, \ell_\distdim},
\end{align*}
where $\basekron_{\ell_1, \cdots, \ell_{\distdim}}$ is the sum of terms that are
Kronecker products of $\base_1, \cdots, \base_\distdim$, such that in each term $\base_1$ appears
$\ell_1$ times, and so on.  (since taking kronecker product is not commutative,
these products do not have to be the same.) for example, when $\distdim$ is two,
$\basekron_{1, 2} = \base_1 \otimes \base_2 \otimes \base_2 + \base_2 \otimes
\base_1 \otimes \base_2 + \base_2 \otimes \base_2 \otimes \base_1$.  by
lemma~\ref{lem:kronecker-lin-ind}, the set of vectors $\{\base_{\ell_{1}}
\otimes \cdots \otimes \base_{\ell_{\nsample}}\}_{\ell_1, \cdots, \ell_\nsample
\in [\distdim]}$ are linearly independent, and therefore so are the
$\basekron_{\ell_1, \ldots, \ell_\distdim}$'s.

now each $(\otimes \distj)^\nsample$ is expressed as a linear combination of
linearly independent vectors, with the linear coefficient on $\basekron_{\ell_1,
\cdots, \ell_\distdim}$ being the product $\decomp_{j1}^{\ell_1}\ldots
\decomp_{j\distdim}^{\ell_{\distdim}}$.  to show linear independence of the set of
vectors $\{(\otimes \distj)^\nsample\}_j$, we only need to show that the set of
$\ndist$ linear coefficients as vectors are linearly independent.

the vector $(\decomp_{j1}^{\ell_1} \ldots \decomp_{j \distdim}^{\ell_{\distdim}})_{\ell_1 +
\cdots + \ell_\distdim = \nsample}$ is the image of the vector
$\vec \decomp_j = (\decomp_{j1}, \ldots, \decomp_{j\distdim})$ under the mapping
$\veronese: \mathbb r^{\distdim} \to \mathbb r^{{\distdim + \nsample - 1\choose
\distdim - 1}}$ which evaluates all the $\nsample$-th degree monomials in
$\mathbb r[x_1, \ldots, x_\distdim]$ at a point in $\mathbb r^{\distdim}$. 
we now show that these $\ndist$ images $\veronese(\vec \decomp_1), \ldots,
\veronese(\vec \decomp_\ndist)$ are
linearly independent when $\nsample = \ndist - \distdim + 1$.

we will show that for every~$j$, there exists a
linear form on $\mathbb r^{{\distdim + \nsample - 1 \choose \distdim - 1}}$ that
vanishes at $\veronese(\vec \decomp_{j'})$ for all $j' \neq j$ and does not vanish at
$\veronese(\vec \decomp_j)$.  this will show that there cannot be any linear
dependence among the $\ndist$ points $\veronese(\vec \decomp_j)$.

since $\{\distj\}_j$ spans a linear space of dimension~$\distdim$, and since
$\{\base_1, \cdots, \base_\distdim\}$ is a basis of this space, the vectors
$\vec \decomp_1, \ldots, \vec \decomp_\ndist$ span a $\distdim$-dimensional
linear space.  without loss of generality, consider $\vec \decomp_1$, 
we can find $\distdim - 1$ other vectors that are
linearly independent with $\vec \decomp_1$.  therefore we can find a linear form
$f_1: (y_1, \ldots, y_\distdim) \mapsto \beta_1 y_1 + \cdots + \beta_\distdim
y_\distdim$ which vanishes at all these $\distdim - 1$ vectors but does not
vanish at $\vec \decomp_j$.  without loss of generality, let the remaining $\ndist -
\distdim$ vectors be $\vec \decomp_{\distdim + 1}, \ldots, \vec
\decomp_{\ndist}$.  for each $j' = \distdim+1, \cdots, \ndist$, since $\vec
\decomp_{j'}$ is not a scaled copy of $\vec \decomp_j$, we can find a linear form $f_{j'}$
such that $f_{j'}$ vanishes at $\vec \decomp_{j'}$ but does not vanish at
$\vec \decomp_j$.  Now consider the product of these $\ndist - \distdim + 1$ linear forms,
\begin{align*}
f = f_1 f_{\distdim + 1} \ldots f_{\ndist}.
\end{align*}
If we take $\nsample$ to be $\ndist - \distdim + 1$, $f$ itself is a linear form
on $\mathbb R^{{\distdim + \nsample - 1 \choose \distdim - 1} }$, and can be
evaluated at $\veronese(\vec \decomp_1), \ldots, \veronese(\vec
\decomp_\ndist)$, and 
\begin{align*}
f(\veronese(\vec \alpha)) = f_1(\vec \alpha) f_{\distdim+1}(\vec \alpha) \ldots
f_{\ndist}(\vec \alpha), \quad \forall \vec \alpha \in \mathbb R^{\distdim}.
\end{align*}

By construction, $f(\veronese(\vec \decomp_{j})) = 0$ for all $j \neq 1$ and
$f(\veronese(\vec \decomp_1)) \neq 0$.  Since the choice of~$\vec \decomp_1$ was
arbitrary, the construction works for arbitrary~$\vec \decomp_j$, and so $\veronese(\vec \decomp_1), \ldots,
\veronese(\vec \decomp_\ndist)$ are linearly independent for $\nsample = \ndist - \distdim + 1$.  This completes the proof.
\end{proof}

\begin{remark}
In the last part of the proof, since no two $\vec \decomp_j, \vec \decomp_{j'}$ are linear copies of each other, the
$\ndist$ vectors $\vec \decomp_1, \ldots, \vec \decomp_\ndist$ can be seen as points in
the projective space $\mathbb P^{\distdim - 1}$.  The mapping $\veronese_\nsample:
\mathbb P^{\distdim - 1} \to \mathbb P^{{\distdim + \nsample \choose \distdim} - 1}$
is known as the Veronese embedding, and its image the Veronese variety.  In the
special case when $\distdim$ is two, the fact that no $\nsample + 1$ points on
$\veronese_\nsample(\mathbb P^1)$ are linearly dependent can be somewhat more
directly shown by an application of the Vandermonde determinant.
\end{remark}

\subsection{A Worst Case Lower Bound on the Number of Samples Needed}
\label{sec:lb-sample}

We now show that the number of samples specified in \autoref{thm:ub-sample} is
tight.

\begin{proposition}
\label{prop:lb-sample}
For any $\ndist$ and $\distdim < \ndist$, there exist $\ndist$ distributions
$\{\distj\}_j$ spanning a $\distdim$-dimensional linear space, such that for any
$\nsample \leq \ndist - \distdim$, the set of vectors $\{\conddist^{\vali, j}
\}_{\vali, j}$ are not linearly independent, for at least one bidder~$i$.
\end{proposition}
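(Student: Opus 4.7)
The plan is to mirror the proof of \autoref{thm:ub-sample} in reverse: exhibit $\ndist$ distributions engineered so that the linear-dependence question for $\{\conddist^{\vali, j}\}_{\vali, j}$ reduces cleanly to a dimension count of Veronese images, and then verify that dependence arises as soon as $\nsample \leq \ndist - \distdim$. The construction I would use is a product-form family: fix $n = 2$ and $|\typespacei[1]| = |\typespacei[2]| = \distdim$, pick linearly independent, strictly positive distributions $q_1, \ldots, q_\distdim$ on $\typespacei[2]$ and linearly independent, strictly positive distributions $f_1, \ldots, f_{\distdim - 2}, g, h$ on $\typespacei[1]$, fix distinct scalars $t_1, \ldots, t_{\ndist - \distdim + 2} \in (0, 1)$, and define $\ndist$ probability vectors $p_1, \ldots, p_\ndist$ on $\typespacei[1]$ by $p_j = f_j$ for $j \leq \distdim - 2$ and $p_{\distdim - 2 + k} = (1 - t_k) g + t_k h$ otherwise. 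The distributions are then $\distj(\vali[1], \vali[2]) = p_j(\vali[1]) q_{\vali[1]}(\vali[2])$; they have common full support $\typespacei[1] \times \typespacei[2]$ and, since $\distj = \sum_\vali p_j(\vali)(e_\vali \otimes q_\vali)$ is a linear image of $p_j$ through an injection (by \autoref{lem:kronecker-lin-ind}), they span a $\distdim$-dimensional linear space as required.

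The key structural feature is that the conditional distribution $\distj_{1, \vali[1]}(\vali[2]) = q_{\vali[1]}(\vali[2])$ does not depend on $j$. This gives $\conddist^{\vali[1], j} = q_{\vali[1]} \otimes (\otimes \distj)^\nsample$, and since $\{q_\vali\}$ is linearly independent, the span of $\{\conddist^{\vali[1], j}\}_{\vali[1], j}$ decomposes as the direct sum $\bigoplus_{\vali[1]} q_{\vali[1]} \otimes W$, where $W := \mathrm{span}\{(\otimes \distj)^\nsample\}_j$. Consequently the $|\typespacei[1]| \cdot \ndist$ vectors $\conddist^{\vali[1], j}$ are linearly dependent if and only if $\dim W < \ndist$. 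Expanding $(\otimes \distj)^\nsample$ in the basis $\{\basekron_{\ell_1, \ldots, \ell_\distdim}\}$ built from $b_\vali = e_\vali \otimes q_\vali$ exactly as in the proof of \autoref{thm:ub-sample} gives
\[
(\otimes \distj)^\nsample = \sum_{\ell_1 + \cdots + \ell_\distdim = \nsample} \Big(\prod_v p_j(v)^{\ell_v}\Big)\, \basekron_{\ell_1, \ldots, \ell_\distdim},
\]
so $\dim W$ equals the dimension of the linear span of the Veronese images $\veronese_\nsample(p_1), \ldots, \veronese_\nsample(p_\ndist)$.

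The remaining dimension count is what makes the bound tight. The $\distdim - 2$ images $\veronese_\nsample(f_j)$ lie in a subspace $V_1$ of dimension at most $\distdim - 2$. For the parametric line $\ell(t) = (1 - t)g + th$, each coordinate of $\veronese_\nsample(\ell(t))$ is a polynomial of degree at most $\nsample$ in $t$, so the entire image curve $\{\veronese_\nsample(\ell(t)) : t \in \mathbb{R}\}$ lies in a subspace $V_2$ of dimension at most $\nsample + 1$; in particular the $\ndist - \distdim + 2$ images $\veronese_\nsample(p_{\distdim - 2 + k})$ all lie in $V_2$. Hence all $\ndist$ Veronese images lie in $V_1 + V_2$, of dimension at most $(\distdim - 2) + (\nsample + 1) = \distdim + \nsample - 1$. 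For $\nsample = \ndist - \distdim$ this bound is $\ndist - 1 < \ndist$, so the Veronese images are linearly dependent; for smaller $\nsample$ the bound only tightens, completing the proof.

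The step I expect to be the main obstacle is the reduction itself: linear dependence among $\{(\otimes \distj)^\nsample\}_j$ does not in general imply dependence among $\{\conddist^{\vali, j}\}_{\vali, j}$, because the conditionals $\distj_{1, \vali}$ typically mix nontrivially with $j$, and a dependency among the $(\otimes \distj)^\nsample$ factors has no obvious way to be combined with the $\distj_\vali$ factors. The product-form construction $\distj(\vali[1], \vali[2]) = p_j(\vali[1]) q_{\vali[1]}(\vali[2])$ is chosen precisely so that $\distj_{1, \vali}$ is constant in $j$, decoupling the two tensor factors; after this, the Veronese dimension count is essentially dual to the one used in the upper bound.
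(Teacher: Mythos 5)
Your proof is correct, and at its core it uses the same two ideas as the paper's proof: (a) build the distributions so that bidder~$1$'s conditionals $\distj_{1,\vali}$ do not depend on $j$, which by \eqref{eq:conddist} and Lemma~\ref{lem:kronecker-lin-ind} reduces the question to whether $\{(\otimes \distj)^\nsample\}_j$ are linearly dependent; and (b) place $\ndist-\distdim+2$ of the distributions in a two-dimensional subspace, so that their Kronecker $\nsample$-th powers live in an $(\nsample+1)$-dimensional space.

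Where you differ from the paper is in explicitness and framing. The paper only sketches part (a), saying ``we have complete freedom to construct the rest of the distribution,'' while you give a concrete product-form construction $\distj(\vali[1],\vali[2])=p_j(\vali[1])\,q_{\vali[1]}(\vali[2])$ that makes all of bidder~$1$'s conditionals $j$-independent, decouples the two tensor factors cleanly, keeps everything full-support, and guarantees that the $\distj$ span exactly dimension $\distdim$. This is a genuine improvement in rigor over the paper's hand-wave. Your construction also automatically satisfies the Cr\'emer--McLean condition for bidder~$1$ since $\{q_\vali\}$ are linearly independent, which the paper does not bother to ensure. For part (b), the paper argues directly that when the $\decomp_j$ are two-dimensional, the monomial basis $\basekron_{\ell_1,\ell_2}$ has only $\nsample+1$ elements, whereas you observe that $t\mapsto\veronese_\nsample(\ell(t))$ is a degree-$\nsample$ polynomial curve and hence lies in an $(\nsample+1)$-dimensional subspace. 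These are two phrasings of the same count, but your version also gives a bound $\distdim+\nsample-1$ on the dimension of the span of all $\ndist$ Veronese images, which is slightly more informative than the paper's observation that some $\ndist-\distdim+2$ of them must be dependent. Overall: same strategy, cleaner execution, with the Veronese framing making the duality with the proof of \autoref{thm:ub-sample} transparent.

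One small housekeeping note: your construction implicitly assumes $\distdim\ge 2$ (so that $g,h$ exist and $|\typespacei[1]|=\distdim\ge 2$), as does the paper's. The case $\distdim=1$ is degenerate (all $\distj$ equal, so the $\conddist^{\vali,j}$ coincide across $j$ and dependence is immediate); it's worth a one-line remark but is not a gap.
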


\begin{proof}
We first show that there are $\distj$'s that make $\{(\otimes
\distj)^\nsample\}_j$ linearly dependent for any $\nsample \leq \ndist -
\distdim$.  First consider the case $\distdim = 2$.

Let $\base_1$ and $\base_2$ be two independent vectors in the span of
$\distj$'s.  Then each $\distj$ can be written as $\decomp_{j1} \base_1 +
\decomp_{j2} \base_2$.  Following a similar calculation as in the proof of
\autoref{thm:ub-sample}, we have
\begin{align*}
\left( \otimes \distj \right)^{\nsample} & = \left( \otimes (\decomp_{j1} \base_1 + \decomp_{j2} \base_2) \right)^{\nsample} \\
& = \sum_{\substack{\ell_1 + \ell_2 = \nsample, \\
\ell_1,  \ell_{2} \geq 0}} \decomp_{j1}^{\ell_1} \decomp_{j2}^{\ell_2}  
\basekron_{\ell_1 \ell_2},
\end{align*}
where $\basekron_{\ell_1, \cdots, \ell_{\distdim}}$ is the sum of terms that are
Kronecker products of $\base_1$ and~$\base_2$, such that in each term $\base_1$
appears $\ell_1$ times, and $\base_2$ appears $\ell_2$ times.  For example,
$\basekron_{1, 2} = \base_1 \otimes \base_2 \otimes \base_2 + \base_2 \otimes
\base_1 \otimes \base_2 + \base_2 \otimes \base_2 \otimes \base_1$.  But we have
just shown that all the $\ndist$ vectors $(\otimes \distj)^\nsample$ can be
written as a linear combinations of $\nsample + 1$ vectors, $\basekron_{0,
\nsample}, \basekron_{1, \nsample - 1}, \cdots, \basekron_{\sample, 0}$.
Therefore, For $\nsample < \ndist - 1$, the vectors $(\otimes \distj)^\nsample$
cannot be linearly independent.

Now by \eqref{eq:conddist}, as long as we can construct, for one $\vali$, such that the conditional distribution $\distj_{\vali}$ is the same for all~$j$, then the vectors $\conddist^{\vali, j}$ cannot be linearly independent as well.  
This is easy to do, since $\distj_{\vali}$ only concerns a proper subset of
coordinates of~$\distj$, and we have complete freedom to construct the rest of the distribution.

The general case $\distdim > 2$ is an easy generalization of the case of
$\distdim = 2$: given $\distdim$ linearly independent $\distj[1], \cdots, \distj[\distdim]$, we
can always let the remaining distributions be linear combinations of $\distj[1]$
and~$\distj[2]$, and repeat the calculation above.
\end{proof}

\section{Discussion}
\label{sec:discuss}
\paragraph{Criticism on the Auction of \citeauthor{CM85}.}

The surplus-extracting auction of Cr\'emer and McLean is often seen as a critique on the model of auction design for correlated agents.  The (arguably) counterintuitive phenomenon of surplus extraction is ``blamed'' on the unrealistic combination of several assumptions in the model: first, that the agents are risk neutral and only considers their expected linear utilities; second, that the auctioneer has exact knowledge on the underlying distribution for the agents' values; and third, that the agents themselves have the same exact knowledge.  The second assumption, and the auctioneer's heavy use of this prior knowledge, is seen as a violation of the desired Wilson's principle.  Our result suggests tha the precision of the auctioneer's prior knowledge may not be the main cause of the mechanism's anomalous performance --- this requirement can be weakened, as long as sampling from the underlying distribution is available, and the number of samples does not have to be large.  This suggests fine-tuning criticism on these auctions on the agents' precise prior knowledge and the interim individual rationality assumption.  

In general, in Bayesian mechanism design, assumptions such as ``who knows what'' are crucial modeling decisions.  Our approach via sample complexity may be useful in examining mechanisms' sensitivity to these assumptions and hence help with fine-tuning the modeling process.

\paragraph{Beyond Finiteness.}

Even though our approach involves inverting matrices whose entries are probabilities of atom events, there may be hope to extend the approach to infinite-support distributions, since there have been such extensions to \citeauthor{CM85}'s auction \citep[e.g.][]{MR92, Rahman10}.  This seems a prerequisite for possibly extending the approach further to infinite families of distributions.  We think it would be interesting to either show an impassable gap between infinite and finite families, or give conditions that makes surplus extraction possible with finitely many samples on infinite families.

\section{Acknowledgement}
The authors would like to thank Nick Gravin and Gjergji Zaimi for helpful
discussions.

\bibliographystyle{apalike}
\bibliography{bibs}

\begin{thebibliography}{}

\bibitem[Babaioff et~al., 2011]{BBDS11}
Babaioff, M., Blumrosen, L., Dughmi, S., and Singer, Y. (2011).
\newblock Posting prices with unknown distributions.
\newblock In {\em ICS}, pages 166--178.

\bibitem[Babaioff et~al., 2012]{BDKS12}
Babaioff, M., Dughmi, S., Kleinberg, R., and Slivkins, A. (2012).
\newblock Dynamic pricing with limited supply.
\newblock In {\em ACM Conference on Electronic Commerce}, pages 74--91.

\bibitem[Badanidiyuru et~al., 2013]{BKS13}
Badanidiyuru, A., Kleinberg, R., and Slivkins, A. (2013).
\newblock Bandits with knapsacks.
\newblock In {\em FOCS}, pages 207--216.

\bibitem[Chawla et~al., 2013]{CHMS13}
Chawla, S., Hartline, J.~D., Malec, D.~L., and Sivan, B. (2013).
\newblock Prior-independent mechanisms for scheduling.
\newblock In {\em STOC}, pages 51--60.

\bibitem[Chen et~al., 2011]{CHLW11}
Chen, X., Hu, G., Lu, P., and Wang, L. (2011).
\newblock On the approximation ratio of k-lookahead auction.
\newblock In {\em WINE}, pages 61--71.

\bibitem[Cr\'emer and McLean, 1985]{CM85}
Cr\'emer, J. and McLean, R.~P. (1985).
\newblock Optimal selling strategies under uncertainty for a discriminating
  monopolist when demands are interdependent.
\newblock {\em Econometrica}, 53(2):345--61.

\bibitem[Devanur et~al., 2011]{DHKN11}
Devanur, N.~R., Hartline, J.~D., Karlin, A.~R., and Nguyen, C.~T. (2011).
\newblock Prior-independent multi-parameter mechanism design.
\newblock In {\em WINE}, pages 122--133.

\bibitem[Dhangwatnotai et~al., 2010]{DRY10}
Dhangwatnotai, P., Roughgarden, T., and Yan, Q. (2010).
\newblock Revenue maximization with a single sample.
\newblock In {\em ACM Conference on Electronic Commerce}, pages 129--138.

\bibitem[Dobzinski et~al., 2011]{DFK11}
Dobzinski, S., Fu, H., and Kleinberg, R.~D. (2011).
\newblock Optimal auctions with correlated bidders are easy.
\newblock In {\em STOC}, pages 129--138.

\bibitem[Fu et~al., 2013]{FHH13}
Fu, H., Hartline, J.~D., and Hoy, D. (2013).
\newblock Prior-independent auctions for risk-averse agents.
\newblock In {\em ACM Conference on Electronic Commerce}, pages 471--488.

\bibitem[Karp and Kleinberg, 2007]{KK07}
Karp, R.~M. and Kleinberg, R. (2007).
\newblock Noisy binary search and its applications.
\newblock In {\em SODA}, pages 881--890.

\bibitem[Li, 2013]{Li13}
Li, Y. (2013).
\newblock Approximation in mechanism design with interdependent values.
\newblock In {\em ACM Conference on Electronic Commerce}, pages 675--676.

\bibitem[McAfee and Reny, 1992]{MR92}
McAfee, R.~P. and Reny, P.~J. (1992).
\newblock Correlated information and mecanism design.
\newblock {\em Econometrica}, pages 395--421.

\bibitem[Myerson, 1981]{M81}
Myerson, R. (1981).
\newblock Optimal auction design.
\newblock {\em Mathematics of Operations Research}, 6(1):pp. 58--73.

\bibitem[Papadimitiou and Pierrakos, 2011]{PP11}
Papadimitiou, C. and Pierrakos, G. (2011).
\newblock On optimal single-item auctions.
\newblock In {\em STOC}, pages 119--128.

\bibitem[Rahman, 2014]{Rahman10}
Rahman, D. (2014).
\newblock Surplus extraction on arbitrary type spaces.
\newblock Forthcoming in Theoretical Economics.

\bibitem[Ronen, 2001]{R01}
Ronen, A. (2001).
\newblock On approximating optimal auctions.
\newblock In {\em ACM Conference on Electronic Commerce}, pages 11--17.

\bibitem[Roughgarden and Talgam-Cohen, 2013]{RT13}
Roughgarden, T. and Talgam-Cohen, I. (2013).
\newblock Optimal and near-optimal mechanism design with interdependent values.
\newblock In {\em ACM Conference on Electronic Commerce}, pages 767--784.

\bibitem[Roughgarden et~al., 2012]{RTY12}
Roughgarden, T., Talgam-Cohen, I., and Yan, Q. (2012).
\newblock Supply-limiting mechanisms.
\newblock In {\em ACM Conference on Electronic Commerce}, pages 844--861.

\bibitem[Wilson, 1987]{W87}
Wilson, R.~B. (1987).
\newblock Game-theoretic approaches to trading processes.
\newblock In Bewley, T., editor, {\em Advances in economic theory: Fifth world
  congress}, pages 33--77. Cambridge.

\end{thebibliography}

\end{document}